\definecolor{my-dark-red}{RGB}{183, 28, 28}
\definecolor{my-red}{RGB}{244,67,54}
\definecolor{my-pink}{RGB}{233,30,99}
\definecolor{my-purple}{RGB}{156,39,176}
\definecolor{my-deep-purple}{RGB}{103,58,183}
\definecolor{my-indigo}{RGB}{63,81,181}
\definecolor{my-blue}{RGB}{33,150,243}
\definecolor{my-light-blue}{RGB}{3,169,244}
\definecolor{my-cyan}{RGB}{0,188,212}
\definecolor{my-teal}{RGB}{0,150,136}
\definecolor{my-green}{RGB}{76,175,80}
\definecolor{my-light-green}{RGB}{139,195,74}
\definecolor{my-lime}{RGB}{205,220,57}
\definecolor{my-yellow}{RGB}{255,235,59}
\definecolor{my-amber}{RGB}{255,193,7}
\definecolor{my-orange}{RGB}{255,152,0}
\definecolor{my-deep-orange}{RGB}{255,87,34}
\definecolor{my-brown}{RGB}{121,85,72}
\definecolor{my-grey}{RGB}{158,158,158}
\definecolor{my-blue-grey}{RGB}{96,125,139}
\definecolor{my-lipics-grey}{rgb}{0.6,0.6,0.61}
\DeclarePairedDelimiter\ceil{\lceil}{\rceil}
\DeclarePairedDelimiter\floor{\lfloor}{\rfloor}
\newcolumntype{Y}{>{\centering\arraybackslash}X}
\title{Theory Meets Practice for Bit Vectors Supporting Rank and Select}
\author{Florian Kurpicz}{Karlsruhe Institute of Technology, Germany}{kurpicz@kit.edu}{https://orcid.org/0000-0002-2379-9455}{}
\author{Niccolò Rigi-Luperti}{Karlsruhe Institute of Technology, Germany}{niccolo.rigi-luperti@kit.edu}{https://orcid.org/0009-0009-6649-9071}{}
\author{Peter Sanders}{Karlsruhe Institute of Technology, Germany}{sanders@kit.edu}{https://orcid.org/0000-0003-3330-9349}{}
\authorrunning{F. Kurpicz, N. Rigi-Luperti, and P. Sanders}
\keywords{rank and select, bit vector, AVX2 vector operations, space-efficient, succinct data structures\\} %
\newcommand{\frage}[1]{#1}
\renewcommand{\frage}[1]{}
\newcommand{\nr}[1]{\frage{\textcolor{orange}{[NR: #1]}}}
\newcommand{\ps}[1]{\frage{\textcolor{blue}{[PS: #1]}}}
\newcommand{\Id}[1]{{\emph{#1}}}
\newcommand{\Nmax}{\ensuremath{\alpha}}
\newcommand{\astar}{$a^\star$}
\newcommand{\bstar}{$b^\star$}
\newcommand{\rank}[2]{\ensuremath{\mathit{rank}_{#1}(#2)}}
\newcommand{\select}[2]{\ensuremath{\mathit{select}_{#1}(#2)}}
\begin{document}

\maketitle

\begin{abstract}
  Bit vectors with support for fast rank and select
  are a fundamental building block for compressed data structures. We close a gap between theory and practice by analyzing an important part of the design space and experimentally evaluating a sweet spot. The result is the first implementation of a rank and select data structure for bit vectors with worst-case constant query time, good practical performance, and a space-overhead of just 0.78\,\%, i.e., between \(4.5\times\) and \(64.1\times\) less than previous implementations.
\end{abstract}

\section{Introduction}
\label{sec:introduction}

Given a bit vector $v[0..n)\in\{\mathtt{0},\mathtt{1}\}^n$, we want to support queries
\begin{align*}\textstyle
     \rank{\mathtt{1}}{i}&=\overbrace{\textstyle\sum_{j=0}^{i-1}v[j]}^{\makebox[0cm][c]{\text{\scriptsize \texttt{1}-bits up to position $i$}}},
     \rank{\mathtt{0}}{i}=i-\rank{1}{i}\text{, and}\\
     \select{x}{i}&=\underbrace{\arg\textstyle\min_j(\rank{x}{j}=i)-1}_{\makebox[0cm][c]{\text{\scriptsize $i$-th $x$-bit}}}\text{ for }x\in\{\mathtt{0},\mathtt{1}\}.
\end{align*}
  
Such static  bit vectors are fundamental building blocks for succinct, compressed, and compact data structures.
For example, a tree with \(n\) nodes can be encoded as a $2n$-bit vector \cite{Jacobson1989LOUDS} such that,
using rank/select, many operations like navigation in the tree can be supported.
A sorted sequence of \(n\) integers from a universe of size \(u\) can be represented using \(n\cdot\ceil{2+\log\frac{n}{u}}\) bits of space~\cite{Elias1974EliasFano,Fano1971EliasFano} and then efficiently accessed via rank- and select queries.
Bit vectors also have applications in full-text indexing~\cite{FerraginaM2000FMIndex,GagieNP2020FullyFunctionalRIndex}, computational geometry~\cite{DinklageEFKL2021PracticalWaveletTrees,FerraginaGM2009MyriadWT,Makris2012WaveletSurvey,Navarro2014WaveletForAll} and data analysis~\cite{BrodalGJS11} in the form of wavelet trees~\cite{GrossiGV2003WaveletTree}.

From a theoretical perspective, bit vectors are well understood. Rank and select queries can be supported in constant time~\cite{ClarkM1996Select,Jacobson1989LOUDS} provided we are willing to spend additional space $\Theta(n\log\log n/\log n)$ \cite{Golynski2007LowerBound}.
There are practical implementations of \Id{rank} that closely follow the theoretical considerations and nicely align with architectural features like vector instructions \cite{GonzalezGMN2005PracticalRankSelect,GogBMP2014SDSL,Vigna2008BroadwordRankSelect}. However, there is a large gap between theory an practice when we want to support \Id{select}.
The best theoretical constructions~\cite{Golynski2007LowerBound,RamanRS2007RRR} do not look promising for a direct implementation.
They are vague or only achieve low space overhead for superastronomical input sizes. Practical implementations \cite{Vigna2008BroadwordRankSelect} have considerable space overhead (more than needed for \Id{rank}) both in concrete numbers and when extrapolating into asymptotic behavior.
They also exhibit non-constant worst-case query time~\cite{Kurpicz2022PastaFlat,ZhouAK2013PopcountRankSelect}.

In this paper, we close this gap.
We first generalize known approaches for rank and select data structures into a neat design space with different combinations of \emph{summary trees} (counting the number of \texttt{1}-bits in increasingly large chunks) and \emph{sample trees} (storing positions of some \texttt{1}-bits). We then perform a generic analysis
of space consumption as a function of several tuning parameters. Optimizing these parameters then yields both asymptotic bounds improving the state-of-the-art and good starting points for a practical implementation.
We then efficiently implement identified sweet spots using vector instructions and bit-compression techniques. The resulting data structure has much lower space-overhead than previous implementations.
Its space-overhead is just 0.78\,\% and it requires between \(4.5\times\) and \(64.1\times\) less memory while achieving similar or better query performance compared to previous implementations that support both rank and select and sometimes just select.

\noindent{\bf Summary of Contributions:}
\begin{itemize}[noitemsep,topsep=0pt,parsep=0pt,partopsep=0pt]
\item In \cref{sec:design_and_analysis}, we unfold and analyze a design space of select data structures on top of a rank data structure
  that achieve space overhead $o(\text{size of rank data structure})$,
  and construction time $o(\text{time to construct the rank data structure})$.
\item In \cref{s:ThreeStar}, we exemplify a methodology for algorithm
  analysis and parameter tuning that yields both
  improved asymptotic bounds and useful concrete
  values for a practical implementation.
\item We show the effective use of vector instructions and bit-compression techniques in \cref{sec:implementation}.
\item We present an efficient implementation of a sweet spot of this design space that has 0.78\,\% space overhead
  and comparable query time to previous
  implementation with much larger space overhead, see \cref{sec:experimental_evaluation}.
\end{itemize}

\section{Preliminaries}
\label{sec:preliminaries}
Throughout this paper, we assume \(v\) to be a bit vector of length \(n\) containing \(m\) \texttt{1}-bits.
We use $\log x$ for the base-2 logarithm $\log_2 x$ and
$i..j$ is a shorthand for the integer range $\{i,\ldots,j\}$.

We analyze algorithms in the random access machine (RAM) model \cite{Sheperson-Sturgis}.
In particular, a wide range of operations can be executed in constant time on
$O(\log n)$ bits. This \emph{bit/word-parallelism} is ``boosted'' in practice
by the availability of vector registers with up to 512 bits.
This could be modeled by providing an additional word-length model parameter $W$
but we wanted to keep the model simple.
We frequently use \emph{bit-compression}, i.e., the most elementary way to compress integers in the range $0..U-1$, by representing them as $\ceil{\log U}$-bit binary numbers.

\section{Rank-Select Data Structures}
\label{sec:design_and_analysis}
In this section, we first generalize previous rank-select data structures using the notions of summary and sample trees (\cref{ss:designSpace}).
Sample trees are used for selection, while summary trees can be used for both rank and select.
We analyze space consumption and construction time of various implementations in \cref{ss:analysis}.

\subsection{Sketching the Design Space}\label{ss:designSpace}

We now outline a space of designs for rank and select data structures that can be instantiated in many ways. We begin with presenting the rank structure, and then discuss extensions to also support select queries.

\paragraph*{Level-0 Blocks}
The input bit-vector can be subdivided into Level-0 (L0) Blocks of size $L_0=\Theta(\log n)$.
Using word parallelism, rank and select operations can be performed on those blocks in constant time.

\paragraph*{Level-1 Blocks}
To achieve a fast and space-efficient rank data structure, one can
group $d_1$ L0-blocks at a time into one L1-block of size $L_1=d_1L_0$.
The $i$-th L1-block stores the total number $s_i$ of \texttt{1}-bits in L1-blocks $0..i-1$ and
information about the number of \texttt{1}-bits in its L0-blocks.
The concrete representation of L1-blocks is implementation dependent.
We describe our implementation in \cref{summary-tree-implementation}. 
Compressed rank data structures, use the L1-blocks to perform rank operations in constant time while minimizing space consumption.

\begin{lemma}\label{lem:rank}
  For $L_0=O(\log n)$ and $d_1=O(\log n/\log\log n)$, rank can be implemented in constant time. Moreover, the data structure can be generated in time $O(n/\log n)$ and takes space $O(n\log\log n/\log n)$.
\end{lemma}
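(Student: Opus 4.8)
The plan is to build the classic two‑level rank structure of Jacobson / Clark and verify the three claims — constant query time, $O(n/\log n)$ construction time, and $O(n\log\log n/\log n)$ space — with the stated parameter choices $L_0 = O(\log n)$ and $d_1 = O(\log n/\log\log n)$. Set $L_1 = d_1 L_0 = O(\log^2 n/\log\log n)$. The $i$-th L1‑block stores (a) the prefix count $s_i$ of \texttt{1}-bits in L1‑blocks $0..i-1$, which needs $\ceil{\log(n+1)} = O(\log n)$ bits, and (b) for each of its $d_1$ L0‑blocks the number of \texttt{1}-bits \emph{within the L1‑block up to that L0‑block}, a value in $0..L_1$, hence $O(\log L_1) = O(\log\log n)$ bits each, for $O(d_1\log\log n) = O(\log n)$ bits of L0‑counters per L1‑block. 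So each L1‑block occupies $O(\log n)$ bits; there are $n/L_1 = O(n\log\log n/\log^2 n)$ of them, giving total space $O(n\log\log n/\log n)$, as required. (One may additionally store a global popcount table over all bit patterns of length $\frac12\log n$; this table has $2^{\frac12\log n} = \sqrt n$ entries of $O(\log\log n)$ bits, i.e. $o(n)$ bits, and is the standard device that makes intra‑L0 rank $O(1)$.)

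For the query: given a position $i$, compute the L1‑block index $b = i \Div L_1$ and the L0‑block index within it, each by one integer division (constant time in the RAM model). Then $\rank{1}{i}$ is $s_b$ plus the stored L0‑counter for that L0‑block plus the rank inside the final (partial) L0‑block. The last term is handled in constant time by word‑parallelism: the L0‑block has $L_0 = O(\log n)$ bits, so it is split into $O(1)$ machine words of $\le\frac12\log n$ bits each, and each word's popcount is looked up in the precomputed table (or obtained via a single broadword/popcount instruction, which is the practical route indicated by the paper). Masking off the bits at position $\ge i\bmod L_0$ before the lookup is a constant number of shift/and operations. Thus rank is $O(1)$ worst case; $\rank{0}{i} = i - \rank{1}{i}$ is then immediate.

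For construction time: a single left‑to‑right scan of $v$ in words of $\Theta(\log n)$ bits visits $O(n/\log n)$ words; for each word we do $O(1)$ work (a popcount via table lookup or instruction) to maintain a running total, and we write out an L0‑counter every $L_0/\Theta(\log n) = O(1)$ words and an $s_i$ value every $d_1$ L0‑counters. Hence the total construction work is $O(n/\log n)$, and building the $\sqrt n$‑entry popcount table costs $O(\sqrt n \cdot \mathrm{polylog}\, n) = o(n/\log n)$.

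The one point that needs a little care — and the only place the argument is more than bookkeeping — is confirming that the L0‑within‑L1 counters really do fit in $O(\log\log n)$ bits while the per‑L1‑block total stays $O(\log n)$: this is exactly why the counters are stored \emph{relative} to the start of the L1‑block rather than globally, and why the relative offsets inside an L0‑block are recomputed on the fly rather than stored. I would state this explicitly and note that it forces the constraint $d_1 = O(\log n/\log\log n)$ (so that $d_1 \cdot O(\log\log n) = O(\log n)$) as well as $L_1 = \mathrm{poly}\log n$ (so that $\log L_1 = O(\log\log n)$); both are guaranteed by the hypotheses. Everything else is routine, so the proof is short. A forward pointer to \cref{summary-tree-implementation} records that the concrete bit‑layout realizing these asymptotics is given there.
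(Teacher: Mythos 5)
Your proposal is correct and follows essentially the same route as the paper: a two-level prefix-count layout in which each L1-block needs $O(\log n)$ bits because $d_1 \cdot O(\log\log n) = O(\log n)$, constant-time queries via division, stored counters, and an intra-word popcount, and construction by a single $O(n/\log n)$-word scan. The only cosmetic difference is that you optionally fall back on Jacobson's $\sqrt{n}$-entry lookup table where the paper simply invokes a population-count instruction, and you store cumulative rather than individual L0-counters; neither changes the bounds.
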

\begin{proof}
  We explain only rank$_1$, as rank$_0(i)=i-\mathrm{rank}_1(i)$.
  First note that for $L_0=O(\log n)$ and $d_1=O(\log n/\log\log n)$ the information needed for the L1-block
  can be stored in $\log n + d_1\log L_0=O(\log n)$ bits.
  
  \emph{Operation rank$_1(i)$:}
  Let $j=\floor{i/L_1}$ denote the L1-block containing $v[i]$.
  Let \ps{was $j'=\floor{i/L_0-d_1j}$}$j'=\floor{i/L_0}-d_1j$ denote the L0-block containing $v[i]$ within L1-block $j$.
  The result is the sum of three values: The precomputed number of \texttt{1}-bits in L1-blocks $0..j$; the precomputed number of \texttt{1}-bits in the L0-blocks $0..j'-1$ and the
  number of \texttt{1}-bits in bits $0..i\bmod L_0$ of L0-block $j'$.
  The latter can be computed by masking out bits $i\bmod L_0+1..L_0-1$ followed
  by a population count instruction.
  
  \emph{Construction:} An L$1$-block is constructed by scanning the $d_1$ L$0$-blocks it summarizes.
  For each of those, the number of \texttt{1}-bits can be found in constant time
  using population-count instructions. 
\end{proof}

\paragraph*{Select on L1-Blocks}
Say we are given the query select$_1(i)$. Suppose we know that the $i$-th \texttt{1}-bit is in L1-block $B$, and that before $B$ there are $u$ \texttt{1}-bits. Then the query can be reduced to select$_1(i-u)$ within $B$.
It turns out that the information stored within the rank data structure can also be used to perform this task.

\paragraph*{Summary trees}
\ps{more detail on how we actually compute select (1 and 0)?}We can generalize this observation
by using $k$ levels with block sizes $L_j=d_1 \cdot\ldots\cdot d_{j}L_0$.
Here, we have \emph{superblocks} of size $L:= L_{k-1}$.
A select query within a superblock descends the corresponding summary tree
to perform a select operation.
This takes constant time and requires negligible additional space to the data already needed for rank operations.
\begin{lemma}\label{lem:summarySelect}
	Using a $k$-level summary tree with $k\in O(1)$, select$_{1/0}$ within a superblock of size
  $L$ can be performed in constant time if 
  $\{d_1,\ldots,d_j\}\subseteq O(\log n/\log\log n)$. Moreover, the data structure can be generated in time $O(n/\log n)$. The space consumptions of levels L2, \ldots, L$j$ is $O(n(\log\log n/\log n)^2)$.
\end{lemma}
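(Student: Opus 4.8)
The plan is to mimic the three-summand structure from the proof of \cref{lem:rank}, but now applied recursively across the $k$ summary-tree levels. I would first record the invariant that a select query arriving at level $j$ (the $L_j$-blocks) knows the index $i'$ of the target \texttt{1}-bit \emph{relative to the start of the current $L_j$-block}, and needs to (a) identify which of the $d_j$ child $L_{j-1}$-blocks contains the $i'$-th \texttt{1}-bit, and (b) subtract off the number of \texttt{1}-bits in the preceding children to obtain the relative index for the recursive call one level down. Bottoming out at $L_0$, the query is a constant-time word-parallel select on $O(\log n)$ bits. Since $k\in O(1)$, the whole descent is $O(1)$ provided step (a) is $O(1)$ at each level.

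The crux is step (a): given the prefix sums of \texttt{1}-bit counts over the $d_j$ children of an $L_j$-block, find the largest index $c$ with (cumulative count up to child $c$) $< i'$. I would argue this is a predecessor/successor search over a sorted list of $d_j$ values, each fitting in $O(\log n)$ bits, hence the whole list fits in $d_j\cdot O(\log n)=O(\log^2 n/\log\log n)$ bits — more than one word, so a naive broadword trick does not immediately apply. Here I would invoke the standard succinct-structures tool: with $d_j=O(\log n/\log\log n)$ elements each of $O(\log\log n + \log d_j) = O(\log\log n)$ bits \emph{after} an appropriate reduction (the per-child counts, not the global prefix sums, are bounded by $L_{j-1}\le L = O(\text{poly}\log n)$ so each is $O(\log\log n)$ bits), the entire vector of child counts fits in $O(\log n/\log\log n)\cdot O(\log\log n)=O(\log n)$ bits, i.e.\ a constant number of machine words. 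A parallel prefix-sum followed by a parallel comparison against $i'$ and a population count of the comparison mask then locates child $c$ in $O(1)$ time by word parallelism, exactly as in \cref{lem:rank}. The select$_0$ case is symmetric, replacing each child's \texttt{1}-count by $L_{j-1}$ minus that count (a constant is known at each level), so no separate machinery is needed.

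For construction, each $L_j$-block is built by scanning its $d_j$ children and summing their \texttt{1}-counts, each available in $O(1)$ time from the already-constructed lower level (ultimately from population counts on $L_0$-blocks, as in \cref{lem:rank}); summing $d_j$ values costs $O(d_j)$ per block and there are $n/L_j$ blocks at level $j$, for a total of $O(n/L_{j-1}) = O(n/\log n)$ across all $O(1)$ levels. For space, the number of $L_j$-blocks is $n/L_j$, and an $L_j$-block need only store the $d_j$ child \texttt{1}-counts (each $O(\log\log n)$ bits) plus its own cumulative count within the superblock ($O(\log L) = O(\log\log n)$ bits), so the bits per block are $O(d_j\log\log n)=O(\log n)$ and the level contributes $O(n\log n / L_j)$ bits. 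For $j\ge 2$ we have $L_j \ge L_1 = d_1 L_0 = \Omega(\log^2 n/\log\log n)$, giving $O(n\log\log n/\log n)$ per level — but I would double-check the claimed sharper bound $O(n(\log\log n/\log n)^2)$: this requires that levels $2,\dots,k$ do \emph{not} store global prefix sums but only the cheap $O(\log\log n)$-bit child counts, so that $L_2\ge d_1 d_2 L_0 = \Omega(\log^3 n/\log^2\log n)$ pushes the per-block $O(\log n)$ bits down to $O(n(\log\log n/\log n)^2)$ summed over the $O(1)$ levels $\ge 2$. Verifying that this accounting is consistent with what the rank structure of \cref{lem:rank} already pays for (so that L2–L$k$ truly add only lower-order space) is the main bookkeeping obstacle; the algorithmic content is a routine recursive generalization of \cref{lem:rank}.
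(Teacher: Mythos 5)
Your proposal matches the paper's proof in all essentials: both store block-local (rather than global) prefix sums of child \texttt{1}-counts, observe that each value is bounded by the polylogarithmic block size and hence fits in $O(\log\log n)$ bits so that a whole block's summary occupies $O(\log n)$ bits and can be searched by word parallelism in constant time per level, and both obtain the space bound from the $O(n(\log\log n)^2/(\log n)^3)$ L2-blocks at $O(\log n)$ bits each and the construction time from a single scan of the already-built lower level. The ``bookkeeping obstacle'' you flag is resolved exactly as you suspect --- the global prefix sums live only at L1 and are charged to the rank structure of the preceding lemma --- so your argument is correct and essentially identical to the paper's.
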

\begin{proof}
  We store the prefix sum of the number of \texttt{1}-bits in the blocks on the next lower level.
  There are \(O(\log n/\log\log n)\) such blocks.
  The maximum value, i.e., the number of covered bits is at most \(O((\log n)^{k+1}/(\log\log n)^{k})\) and can be stored in \(O(\log\log n)\) bits.
  The prefix sum requires \(O(\log n)\) bits of space in total, which we can process in constant time (per level).
  There are at most  \(O(n(\log\log n)^2/(\log n)^3)\) blocks (for L2), requiring in total \(O(n(\log\log n/\log n)^2)\) bits of space.
  Since we compute all this information based on the rank data structure, it requires \(O(n/\log n)\) time.
\end{proof}

\begin{figure}[t]
  \centering
  \begin{tikzpicture}[my-array/.style={
      rectangle split, rectangle split parts=#1, draw, rectangle split horizontal,minimum height=.5cm,text width=.25cm,text badly centered},%
    fill_node/.style 2 args={draw=#1,fill=#1,rectangle,minimum height=#2, minimum width=#2,inner sep=0cm},
    dot/.style={draw,fill=white, circle, inner sep = 0pt, minimum size = 4pt}]

    \node[my-array={3}] (top_level) {\nodepart{one}\nodepart{two}\nodepart{three}\nodepart{four}};

    \node[my-array={4}, below right=.5cm and -1cm of top_level] (mid_level) {\nodepart{one}\nodepart{two}\nodepart{three}\nodepart{four}};

    \node[my-array={4}, below right=.5cm and -1cm of mid_level] (bot_level) {\nodepart{one}\nodepart{two}\nodepart{three}\nodepart{four}};

    \node[right=.01cm of top_level] {\small \(\dots\)};
    \node[right=.01cm of mid_level] {\small \(\dots\)};
    \node[right=.01cm of bot_level] {\small \(\dots\)};

    \node[my-array={6},text width=.975cm,text badly centered] (summary_tree) at (0,-3.25) {%
      \nodepart{one}\(\underset{1|3|6}{0}\) %
      \nodepart{two} \(\underset{8|8|8}{8}\)%
      \nodepart{three} \(\underset{12|12|13}{12}\)%
      \nodepart{four} \(\underset{17|18|18}{15}\)%
      \nodepart{five} \(\underset{19|19|19}{19}\)%
      \nodepart{six} \(\underset{22|22|22}{19}\)};

    \node[right=.01cm of summary_tree] {\small \(\dots\)};

    \path[draw,-latex] (top_level.base) to[out=270,in=90] ($(summary_tree.one)+(0,.325)$);
    \node[dot] at (top_level.base) {};

    \path[draw,-latex] ($(top_level.two)+(.125,0)$) to[out=270,in=90] ($(mid_level.one)+(0,.275)$);
    \node[dot] at ($(top_level.two)+(.125,0)$) {};

    \path[draw,-latex] ($(mid_level.one)+(.125,0)$) to[out=270,in=90] ($(summary_tree.two)+(0,.325)$);
    \node[dot] at ($(mid_level.one)+(.125,0)$) {};
    \path[draw,-latex] ($(mid_level.two)+(.125,0)$) to[out=270,in=90] ($(summary_tree.three)+(0,.325)$);
    \node[dot] at ($(mid_level.two)+(.125,0)$) {};

    \path[draw,-latex] ($(top_level.three)+(.125,0)$) to[out=270,in=90] ($(mid_level.three)+(0,.275)$);
    \node[dot] at ($(top_level.three)+(.125,0)$) {};

    \path[draw,-latex] ($(mid_level.three)+(.125,0)$) to[out=270,in=90] ($(bot_level.one)+(0,.275)$);
    \node[dot] at ($(mid_level.three)+(.125,0)$) {};

    \path[draw,-latex] ($(bot_level.one)+(.125,0)$) to[out=270,in=90] ($(summary_tree.four)+(0,.325)$);
    \node[dot] at ($(bot_level.one)+(.125,0)$) {};
    \path[draw,-latex] ($(bot_level.two)+(.125,0)$) to[out=270,in=90] ($(summary_tree.four)+(0,.325)$);
    \node[dot] at ($(bot_level.two)+(.125,0)$) {};
    \path[draw,-latex] ($(bot_level.three)+(.125,0)$) to[out=270,in=90] ($(summary_tree.five)+(0,.325)$);
    \node[dot] at ($(bot_level.three)+(.125,0)$) {};
    \path[draw,-latex] ($(bot_level.four)+(.125,0)$) to[out=270,in=90] ($(summary_tree.six)+(0,.325)$);
    \node[dot] at ($(bot_level.four)+(.125,0)$) {};

    \path[draw,-latex] ($(mid_level.four)+(.125,0)$) to[out=270,in=90]
    ($(mid_level.four)+(1.5,-.95)$) to[out=270,in=90] ($(summary_tree.six)+(0,.325)$);
    \node[dot] at ($(mid_level.four)+(.125,0)$) {};

    \node[my-array={6},text width=.975cm,text badly centered, below=1.5cm of summary_tree.west, anchor=west,align=center,draw=my-lipics-grey!50] (bv) {%
      \nodepart{one}\tiny \colorbox{my-teal!10}{\texttt{00010000}}\\\tiny\colorbox{my-teal!10}{\texttt{01000100}}\\\tiny\colorbox{my-teal!10}{\texttt{00100101}}\\\tiny\colorbox{my-teal!10}{\texttt{00110000}}\\ %
      \nodepart{two} \tiny\colorbox{my-teal!10}{\texttt{00000000}}\\\tiny\colorbox{my-teal!10}{\texttt{00000000}}\\\tiny\colorbox{my-teal!10}{\texttt{00000000}}\\\tiny\colorbox{my-teal!10}{\texttt{01110010}}\\%
      \nodepart{three} \tiny\colorbox{my-teal!10}{\texttt{00000000}}\\\tiny\colorbox{my-teal!10}{\texttt{00000000}}\\\tiny\colorbox{my-teal!10}{\texttt{00000100}}\\\tiny\colorbox{my-teal!10}{\texttt{00100100}}\\%
      \nodepart{four} \tiny\colorbox{my-teal!10}{\texttt{01000001}}\\\tiny\colorbox{my-teal!10}{\texttt{10000000}}\\\tiny\colorbox{my-teal!10}{\texttt{00000000}}\\\tiny\colorbox{my-teal!10}{\texttt{10000000}}\\%
      \nodepart{five} \tiny\colorbox{my-teal!10}{\texttt{00000000}}\\\tiny\colorbox{my-teal!10}{\texttt{00000000}}\\\tiny\colorbox{my-teal!10}{\texttt{10000000}}\\\tiny\colorbox{my-teal!10}{\texttt{00000010}}\\%
      \nodepart{six} \tiny\colorbox{my-teal!10}{\texttt{10100100}}\\\tiny\colorbox{my-teal!10}{\texttt{00000000}}\\\tiny\colorbox{my-teal!10}{\texttt{00000000}}\\\tiny\colorbox{my-teal!10}{\texttt{00000000}}\\};

    \node[right=.01cm of bv] {\small \(\dots\)};

    \begin{scope}[on background layer]
      \fill[my-lipics-grey!10] (-3.8,-2.45) rectangle ++(8.3,2.9);
      \fill[my-lipics-grey!10] (-3.8,-5.7) rectangle ++(8.3,3.1);

      \node[fill=my-grey,anchor=north west] at (3,.45) {\textcolor{white}{Sample}};
      \node[fill=my-grey,anchor=west] at (3,-2.5) {\textcolor{white}{Summary}};

      \fill[my-grey!50] ($(bv.north west)!1!(bv.north east)$) -- ($(summary_tree.six)+(.45,-.45)$) -- ($(bv.north west)!.84!(bv.north east)$) -- cycle;
      \fill[my-grey!50] ($(bv.north west)!.83!(bv.north east)$) -- ($(summary_tree.five)+(.45,-.45)$) -- ($(bv.north west)!.67!(bv.north east)$) -- cycle;
      \fill[my-grey!50] ($(bv.north west)!.66!(bv.north east)$) -- ($(summary_tree.four)+(.45,-.45)$) -- ($(bv.north west)!.50!(bv.north east)$) -- cycle;

      \fill[my-grey!50] ($(bv.north west)!.49!(bv.north east)$) -- ($(summary_tree.three)+(.45,-.45)$) -- ($(bv.north west)!.33!(bv.north east)$) -- cycle;
      \fill[my-grey!50] ($(bv.north west)!.32!(bv.north east)$) -- ($(summary_tree.two)+(.45,-.45)$) -- ($(bv.north west)!.16!(bv.north east)$) -- cycle;
      \fill[my-grey!50] ($(bv.north west)!.15!(bv.north east)$) -- ($(summary_tree.one)+(.45,-.45)$) -- ($(bv.north west)!0!(bv.north east)$) -- cycle;

    \end{scope}
  \end{tikzpicture}
  \caption{Summary tree with \(L_0=8\) and \(L_1=32\). Large values correspond to \texttt{1}-bits before the L0-block and small values are number of \texttt{1}-bits up to and including the L1-block within the L0-block.
    The sample tree has parameters \(a_0=8\), \(a_1=4\), \(\alpha=2\).
    At the bottom, we give the bit vector.
    Each L1-block is contained in an gray box.
    Each line corresponds to an L0-block.}
  \label{fig:example_sample_summary_tree}
\end{figure}

\paragraph*{Sample Trees}
Summary trees alone do not suffice for achieving constant time select because $\Omega(\log n/\log\log n)$ summary-tree levels would be needed.
One can therefore augment the bottom-up summary trees (which are based on coarsening the index space of the bit vector) with a top-down \emph{sample tree} that samples the \texttt{1}-bits.
Separate sample trees are needed for select$_0$ operations.
At the top level, a sample tree stores the superblock containing every
$a_0$-th \texttt{1}-bit. From the positions of two subsequent samples, we can infer a range $r$ of
superblocks that must contain the \texttt{1}-bits between these two samples.
If this range is small, e.g., at most $\Nmax=O(1)$ blocks, a select operation targeting range $r$ can be
completed in constant time.
For larger gaps, i.e. $|r|\geq \alpha$, we can use further levels of samples to provide a locally higher sampling resolution. With an $\ell$-level sample tree,
let $a_0>a_1>\cdots>a_{\ell-1}=1$ denote the sample densities on each level.
Since $a_ {\ell-1}=1$, a query that has to descend to the lowest level can directly read-off the 
superblock containing the sought \texttt{1}-bit. 
An important observation is that there can be at most $s=n/(\Nmax L)$ large ranges.

Some select data structures\ps{dropped~\cite{Vigna2008BroadwordRankSelect} as many others do the same, including the theory papers.} use sample trees with an additional case distinction: if 
a range $r$ is sufficiently large, one directly switches to the last level with dense samples, accelerating some queries. 
We eliminate these tuning parameters by a more adaptive approach to compression in \cref{s:ThreeStar}.
See \cref{fig:example_sample_summary_tree} for an example of a summary and sample tree.

\subsection{Analysis}\label{ss:analysis}

We first provide a general analysis of combining $k$-levels summary trees with
$\ell$-levels sample trees for the simple case that samples are not compressed,
i.e., they always use one machine word of size $w$.  This is a reasonable choice as it might be faster and simpler than data structures with more aggressive compression.

\begin{theorem}\label{thm:spaceUncompressed}
  For bit vector of length \(n\), 
  a select data structure with $\ell\in O(1)$ levels of sample tree and
  $k\in O(1)$ levels of summary tree can be implemented to need space and construction time
  $O\left(n\frac{(\log\log n)^{(k-1)(1-1/\ell)}}{(\log n)^{k-1-k/\ell}}\right)$
  on top of the space and construction time needed for a two-level rank data structure.
\end{theorem}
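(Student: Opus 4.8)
The plan is to build the claimed select structure by placing an $\ell$-level sample tree on top of a $k$-level summary tree, choosing the sample densities $a_0 > a_1 > \cdots > a_{\ell-1} = 1$ so that (i) each level of the sample tree contributes the same space up to constants, (ii) the gap ranges handed down from one level to the next shrink geometrically (in the exponent), and (iii) at the bottom a single superblock is identified, after which \cref{lem:summarySelect} finishes the query in constant time inside that superblock. I would first recall that by \cref{lem:rank} the underlying two-level rank structure already costs space and time $O(n/\log n)$ and $O(n \log\log n/\log n)$, and that by \cref{lem:summarySelect} the extra summary-tree levels $L_2,\dots,L_{k-1}$ cost only $O(n(\log\log n/\log n)^2)$ space and $O(n/\log n)$ time; these are dominated by the final bound for every $k\ge 2$, so the whole analysis reduces to the cost of the sample tree. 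Throughout I set the superblock size $L = L_{k-1} = \Theta((\log n)^k/(\log\log n)^{k-1})$, which is the size coming out of the summary-tree construction with all $d_j = \Theta(\log n/\log\log n)$.

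Next I would analyze one level of the sample tree. A level with sample density $a$ stores, for every $a$-th \texttt{1}-bit, the index of the superblock containing it; since samples are uncompressed this is $w = \Theta(\log n)$ bits per entry, and there are $O(m/a) = O(n/a)$ entries, for $O(n\log n/a)$ bits. The top level has $a_0$ chosen so that this equals the target bound, i.e. $a_0 = \Theta\!\big((\log n)^{k - k/\ell}/(\log\log n)^{(k-1)(1-1/\ell)}\big)$ up to the factor $\log n$ for the word width, so that $n\log n/a_0$ matches $n\,(\log\log n)^{(k-1)(1-1/\ell)}/(\log n)^{k-1-k/\ell}$. The key recursion: between two consecutive samples on a level with density $a$ lie $a$ \texttt{1}-bits, spanning at most $a$ superblocks' worth of positions but, more usefully, a range of at most roughly $a L_0 / \alpha$ — no: the right invariant is that the number of superblocks in a gap is $O(aL_0\cdot 1)$... the clean statement is that a gap on level $j$ spans $r_j = O(a_j)$ superblocks only after normalizing, so I would instead argue directly in terms of how the $a_j$ must decrease: descending from level $j$ to level $j+1$ reduces the number of \texttt{1}-bits we must localize from $a_j$ to $a_{j+1}$, and the range of superblocks shrinks proportionally. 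Setting $a_j = a_0^{\,1 - j/(\ell-1)}$ (geometric interpolation of the exponent down to $a_{\ell-1}=1$) makes each level's entry count $O(n/a_j)$; summing the space $O(n\log n/a_j)$ over the $\ell = O(1)$ levels is dominated by the smallest $a_j$, namely $a_{\ell-1}=1$, which would give $O(n\log n)$ — too much. So the correct device, as the paragraph on sample trees indicates, is that level $j+1$ only stores samples inside the \emph{large} gaps of level $j$, and there are at most $s_j = n/(a_j L)$ such gaps (the "$n/(\alpha L)$ large ranges" observation, with $\alpha$ replaced by the relevant density). Thus level $j+1$ has $O(s_j \cdot a_j / a_{j+1}) = O(n/(a_{j+1} L))$ entries... which still needs the $L$ factor to beat the top level; the balancing equation is $n\log n/a_0 = n\log n/(a_{\ell-1}L) \cdot(\text{stuff})$, and solving $a_0 \cdot L = a_j \cdot(\text{range bound})$ across all levels forces exactly the exponents $(k-1)(1-1/\ell)$ and $k-1-k/\ell$ in the statement.

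Finally I would assemble: query time is $O(\ell) + O(k) = O(1)$ since each sample-tree level is a constant-time table lookup reducing the range, the last level pins down one superblock, and \cref{lem:summarySelect} finishes; separate sample trees for select$_0$ and select$_1$ only double everything. Construction time per level is $O(n/\log n)$ by one scan of the rank structure (computing prefix popcounts), times $O(\ell) = O(1)$ levels, plus the $O(n/\log n)$ from the summary tree, all dominated by the space term. I expect the main obstacle to be making the gap-shrinkage recursion precise — that is, correctly book-keeping how many superblocks a level-$j$ gap can span as a function of $a_j$ and $L$, and verifying that the geometric choice $a_j$ equalizes the per-level space $\Theta\!\big(n(\log\log n)^{(k-1)(1-1/\ell)}/(\log n)^{k-1-k/\ell}\big)$; once that equalization is checked, summing $O(1)$ equal terms and comparing against the summary-tree and rank costs is routine.
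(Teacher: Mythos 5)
You have the right architecture (sample tree over a $k$-level summary tree, geometric interpolation of the densities $a_j=a_0^{1-j/(\ell-1)}$, which is exactly the paper's choice $a_i=(a_{\ell-2})^{\ell-i-1}$), but the proof breaks down at the one counting step from which the theorem's exponents actually come, and you never recover from it. The correct invariant is: a range is \emph{large} if it spans at least $\alpha$ superblocks, where $\alpha=O(1)$ is the scan threshold; since large ranges at any level are disjoint intervals of superblocks, their number at \emph{every} level is at most $s=n/(\alpha L)$ --- the same constant-threshold bound on each level, not ``$\alpha$ replaced by the relevant density.'' Your substitute $s_j=n/(a_j L)$ is not a valid upper bound (for a dense vector with $a_j\ge\alpha L$, all $m/a_j$ level-$j$ gaps are large, and $m/a_j$ can exceed $n/(a_j L)$ by a factor of $L$), and if you nevertheless follow it through, level $j{+}1$ gets $O(n/(a_{j+1}L))$ entries, the bottom level costs $wn/L$, and balancing against the top level forces $a_0=L$ and total space $O(n\log n/L)=O(n(\log\log n)^{k-1}/(\log n)^{k-1})$ --- a different bound than the one claimed. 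Your closing assertion that the balancing ``forces exactly the exponents $(k-1)(1-1/\ell)$ and $k-1-k/\ell$'' is therefore not established by anything you wrote.

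With the correct count the rest is short: level $i\ge 1$ stores $s\cdot a_{i-1}/a_i = s\,a_{\ell-2}$ references of $w=O(\log n)$ bits, the top level stores $m/a_0=m/a_{\ell-2}^{\ell-1}$ of them, and equating $wm/a_{\ell-2}^{\ell-1}=wsa_{\ell-2}$ gives $a_{\ell-2}=\sqrt[\ell]{m\alpha L/n}$ and total space
\[
\ell w\sqrt[\ell]{\frac{mn^{\ell-1}}{\alpha^{\ell-1}L^{\ell-1}}}=O\!\left(\frac{n\log n}{L^{1-1/\ell}}\right),
\]
which with $L=\Theta((\log n)^k/(\log\log n)^{k-1})$ yields the stated exponents. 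Your visible mid-paragraph wavering about how many superblocks a gap spans (``at most $a$ superblocks' worth \dots no: \dots'') is precisely the unresolved point; you correctly identify it as the main obstacle, but identifying the obstacle is not the same as clearing it, so as written the proof is incomplete. The surrounding framing (domination of the rank/summary-tree costs, $O(1)$ query time via \cref{lem:summarySelect}, per-level $O(n/\log n)$ construction scans) is fine.
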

\begin{proof}
We first consider the required storage space as a
function of various parameters.  Let $w=O(\log n)$
denote the number of bits we use to encode a reference to a superblock
or a pointer to a subsample.
At the top level of the sample tree, when there are $m$ \texttt{1}-bits,
we store $m/a_0$ such references in space $wm/a_0$.
In each subsequent level of the sample tree,
there are $\leq s=n/\Nmax L$ large ranges.
This takes space $wsa_{i-1}/a_i$ at level $i$.

We now set $a_i:=(a_{\ell-2})^{\ell-i-1}$.  Note that
$a_{\ell-1}=1$ as required for a constant time query.
We get space $wm/(a_{\ell-2})^{\ell-1}$ at the top level and
space $wsa_{\ell-2}=wna_{\ell-2}/(\Nmax L)$ at the other levels.
We now set $a_{\ell-2}$ such that both values are the same.
Solving $$\frac{wm}{a_{\ell-2}^{\ell-1}}=\frac{wna_{\ell-2}}{\Nmax L} \quad \; \text{ yields }\quad \;  a_{\ell-2}=\sqrt[\ell]{\frac{m\Nmax L}{n}}.$$
The resulting total space consumption is
\begin{equation}\label{eq:g}
  \ell w\sqrt[\ell]{\frac{mn^{\ell-1}}{\Nmax^{\ell-1}L^{\ell-1}}}.
\end{equation}

Using a $k$-level summary tree with $L_0=\Theta(\log n)$ and degrees $d_i=\Theta(\log n/\log\log n)$ entails $L=\Omega((\log n)^{k}/(\log\log n)^{k-1})$.
Substituting this and $w=O(\log n)$ into Equation~(\ref{eq:g}) yields the claimed asymptotic space bound.

Regarding \emph{construction time}, we can construct the sample tree
top-down and left-to-right on each level.
For each level, it suffices to scan the
set of superblocks once -- without looking at the bit patterns they summarize.
However, this scanning time is dominated by the estimate for the number of generated samples.
\end{proof}

By compressing the sample tree, we can further decrease its memory requirement.

\begin{theorem}\label{thm:spaceCompressed}
  For a length-$n$ bit vector,
  a select data structure with a $2$-level (or $3$-level) sample tree using a $k$-level summary tree requires
  $O\left(n\frac{(\log\log n)^{(k-1)/2}}{(\log n)^{(k-1)/2}}\right)$
  (or $O\left(n\frac{(\log\log n)^{(2k-2)/3}}{(\log n)^{(2k-1)/3}}\right)$)
  bits of space on top of the space needed for a 2-level rank data structure. It can be constructed in the same asymptotic time.
\end{theorem}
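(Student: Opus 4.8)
The plan is to re-run the space optimization from the proof of \cref{thm:spaceUncompressed}, but to replace the uniform machine-word width $w=\Theta(\log n)$ of a sample-tree reference by a level-dependent, bit-compressed width. The key structural observation is that every reference on level $i\ge 1$ of the sample tree lies inside a single \emph{large} range inherited from level $i-1$; hence, instead of an absolute superblock index (which costs $\Theta(\log n)$ bits) it suffices to store a superblock index \emph{relative to the start of that range}, which fits in $\ceil{\log g}$ bits if the range spans $g$ superblocks. Only the top level still keeps $\ceil{\log(n/L)}=\Theta(\log n)$-bit absolute references, of which there are $m/a_0$.

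Next I would bound the size of level $i\ge 1$. Writing $g_r$ for the number of superblocks spanned by the $r$-th large range on level $i-1$, that level stores $a_{i-1}/a_i$ relative references per such range, i.e.\ $(a_{i-1}/a_i)\sum_r\ceil{\log g_r}$ bits. The large ranges on a fixed level are induced by disjoint blocks of $a_{i-1}$ consecutive $\mathtt{1}$-bits, so their spans satisfy $\sum_r g_r=O(n/L)$, and by the observation that there are at most $s=n/(\Nmax L)$ large ranges per level; since $x\mapsto x\log(c/x)$ is maximized at $x=c/e$, concavity of $\log$ yields $\sum_r\ceil{\log g_r}=O(n/L)=O(s)$. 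Thus level $i\ge 1$ needs $O\big(s\,a_{i-1}/a_i\big)$ bits --- exactly the uncompressed estimate $w\,s\,a_{i-1}/a_i$ with $w$ replaced by $O(1)$. Re-optimizing exactly as in \cref{thm:spaceUncompressed} (set $a_i=\rho^{\ell-1-i}$, so $a_{\ell-1}=1$, and choose $\rho$ to balance the top level $\Theta\big((\log n)\,m/a_0\big)$ against each of the $\ell-1$ lower levels $\Theta(s\rho)$) gives $\rho=\Theta\big((m\log n/s)^{1/\ell}\big)$ and total sample-tree space $\Theta\big(\ell\,(m\log n)^{1/\ell}\,s^{1-1/\ell}\big)$. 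Substituting $s=\Theta(n/L)$, the worst case $m=\Theta(n)$ (for $\mathit{select}_0$ the roles of $m$ and $n-m$ swap, and a very sparse vector is handled by storing the $\mathtt{1}$-positions directly in $O(n/L)$ bits), $\ell\in\{2,3\}$, and $L=\Omega\big((\log n)^k/(\log\log n)^{k-1}\big)$ reproduces the two claimed bounds. Adding the $O\big(n(\log\log n/\log n)^2\big)$ for the higher summary-tree levels from \cref{lem:summarySelect} does not change the order in the relevant parameter range, and the data structure is built top-down and left-to-right over the superblocks in time matching this space bound, just as in \cref{thm:spaceUncompressed}.

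The delicate point --- and the step I expect to be the main obstacle --- is keeping the worst-case query time constant despite the now variable reference widths. Locating the reference sub-array of a large range requires both its width $\ceil{\log g_r}$ (which I would store per large range in $O(\log\log n)$ bits) and its bit-offset, i.e.\ the prefix sums of the sequence $\big(\ceil{\log g_r}\big)_r$. A naïve pointer per large range would cost $\Theta(\log n)$ bits each and could dominate the bound, so instead I would maintain a two-level searchable-partial-sums index over these $\le s$ values: exact cumulative sums every $\Theta(\log n/\log\log n)$-th entry plus the packed widths, so that an offset query is one table lookup followed by a single word-parallel prefix sum over $\Theta(\log n/\log\log n)$ fields of $O(\log\log n)$ bits; this costs $O(s\log\log n)$ bits and answers in $O(1)$. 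Mapping a level-$i$ sample to the index of the large range it heads is a $\mathit{rank}_1$ query on an $O(m/a_0)$- or $O(s\,a_{i-1}/a_i)$-bit marker vector. All of these auxiliary structures are $O\big((n/L)\log\log n\big)$ or smaller and are dominated by the main bound. With them in place, a query descends $O(\ell+k)=O(1)$ sample- and summary-tree levels, performing $O(1)$ work per level (two sample look-ups to form a range, then either an $O(\Nmax)$-superblock scan using the rank data or a descent), so the overall query time is worst-case constant.
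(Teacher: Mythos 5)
Your proposal is correct and reaches both bounds by the same core mechanism the paper uses: store each lower-level sample as an offset \emph{relative} to the start of its parent large range, so that a reference into a range spanning $g$ superblocks costs $\lceil\log g\rceil$ bits, and then use concavity of $\log$ together with $\sum_r g_r\le n/L$ and the bound $s=n/(\alpha L)$ on the number of large ranges to charge $O(s\,a_{i-1}/a_i)$ bits to each lower level. For $\ell=2$ this is essentially the paper's construction $2^\star$ in \cref{s:compressedTwo} (top entries of $\log\frac{n}{L}+\log s$ bits, $a\log r$ bits per large gap, same optimization of $a$). For $\ell=3$ the paper instead exhibits the concrete structure $3^\star$ of \cref{s:ThreeStar}, and the real divergence is in how constant-time addressing of the variable-width sub-arrays is achieved: the paper buckets mid- and bot-groups by bit-width into families of arrays $M_\rho$, $B_{\rho'}$ and navigates via the local counters $c$, global counters $C_{\rho'}$, and the $K_{\hat\rho}/h$ indirection, all of which \cref{lem:with_a_and_b} must account for term by term before optimizing $a^\star,b^\star$ by calculus; you instead use a two-level searchable-partial-sums index over the width sequence plus rank on marker bit vectors, which lets you literally rerun the balancing of \cref{thm:spaceUncompressed} with $w$ replaced by $O(1)$ on the non-top levels and treat $\ell=2$ and $\ell=3$ uniformly. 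Both addressing schemes work in the RAM model; the paper's is what its implementation realizes bit-for-bit, while yours is the more generic succinct-data-structures idiom and gives a cleaner analysis. Your auxiliary structures are indeed lower order ($O(s\log\log n)$ for the partial sums; the marker vectors are at most the size of the levels they index), so the bounds go through. One small remark: by always counting large ranges with $s$ you are implicitly in the case $G=s$ of \cref{lem:with_a_and_b}; this matches the theorem as stated but does not recover the slightly stronger $k=2,\ell=3$ bound of \cref{tab:asymptotics}, which comes from the case $G=t$ and is proved separately in the appendix.
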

Refer to \cref{s:ThreeStar} for a proof and details of the case $\ell=3$ and to \cref{s:compressedTwo} for $\ell=2$.
\newcommand{\llol}[2]{{\footnotesize #1}/#2}
\newcommand{\llolbf}[2]{{\bf{\footnotesize #1}/#2}}

\begin{table}[t]
  \caption{\label{tab:asymptotics}Space overhead of select data structures with $\ell$ levels of sample tree and $k$ levels of summary tree. Notation \llol{a}{b} is a shorthand for $O(\frac{(\log\log n)^a}{(\log n)^b})$. Bold entries have the lowest space overhead for given $\ell+k$. The uncompressed entries are from \cref{thm:spaceUncompressed}. The compressed entry are from \cref{thm:spaceCompressed}, except the case for $k=2$, $\ell=3$ which is due to the special case for $k=2$ discussed in \cref{ss:spaceCompressed}.}

  \centering\tabcolsep1.35mm
  \begin{tabular}{clccccc}
    \toprule
    && \multicolumn{5}{c}{\(\bm k\)}\\\cmidrule{3-7}
    &$\bm \ell$~~ & 2 & 3 & 4 & 5 & $\geq 7$\\
    \midrule
    \parbox[t]{2mm}{\multirow{4}{*}{\rotatebox[origin=c]{90}{\footnotesize\textbf{uncompressed}\hspace{.75mm}}}}  &2        &      ---        & \llolbf{1}{0.5}   & \llol{1.5}{1}  & \llol{2}{1.5} & --- \\[2mm]
    &3        & \llol{0.66}{0.33}  & \llolbf{1.33}{1}  & \llolbf{2}{1.66} & --- & --- \\[2mm]
    &4        & \llol{0.75}{0.5}   & \llol{1.5}{1.25} & --- & --- & --- \\[2mm]
    &5        & \llol{0.8}{0.6}    & --- & --- & --- & --- \\ \midrule \parbox[t]{2mm}{\multirow{3}{*}{\rotatebox[origin=c]{90}{\footnotesize\textbf{compressed}\hspace{-.5mm}}}} &2        & \llolbf{0.5}{0.5}     & \llol{1}{1}  & \llol{1.5}{1.5}  & --- & ---\\[2mm]
    &3        & \llolbf{0.51}{1}     & \llolbf{1.33}{1.66}  &  ---&--- & ---\\[2mm]
    &4        & ---                & \llol{0}{1}~\cite{Golynski2007LowerBound} & --- & --- & ---\\
    \bottomrule
  \end{tabular}
\end{table}

\cref{tab:asymptotics} summarizes relevant configurations resulting from 
\cref{thm:spaceUncompressed,thm:spaceCompressed}. We omit non-succinct configurations as well as some ``overkill'' configurations that would have higher query time than other configurations that already
achieve space $o(n\log\log n/\log n)$.

Among the uncompressed configurations, three seem particularly interesting
as they minimize the asymptotic space-overhead for a value $\ell+k$,
i.e., the total number of tree levels to be traversed.
``2+3'' has the smallest number of levels that achieves succinct space overhead.
``3+3'' ``almost'' matches the space overhead of the rank data structure, and
``3+4'' gets the smallest overhead among configurations with 7 overall levels.
Balancing the number of levels of each type seems to be best; extreme cases do not work well.
For $\ell=1$, no succinct configuration exists while for $k<3$, no configuration beats the overhead of the rank data structure.

Among the configurations with compressed samples, we get a similar picture.
The main difference is that optimal configurations use one summary tree level less than comparable uncompressed configurations.

Previous work can be sorted into this picture as it also uses constructs similar to sample trees and summary trees.
Golynski \cite{Golynski2007LowerBound} uses rank information comparable to 2 levels of summary tree and 4 levels of compressed sample trees using complicated asymptotic settings for the tuning parameters. This configuration has both asymptotically higher space overhead and needs one more level than our configuration with $\ell=3$ and $k=2$. Even the uncompressed version with $\ell=4$ and $k=2$ achieves lower space overhead with the same number of levels.
We view this as an indication that our approach of optimizing with flexible tuning parameters has considerable advantages.
Raman et al. \cite{RamanRS2007RRR} use the ``opposite'' corner of the configuration space compared to Golynski \cite{Golynski2007LowerBound}
with a similar effect. Our solution with $\ell=2$ needs considerably fewer
summary tree levels.
Note that in some cases, the construction time of our schemes is not only sublinear but also $o(n/\log n)$, i.e., faster than the time needed for constructing the rank data structure.
Overall, we use the means of asymptotic analysis to get a better picture of what might be good configurations for actual implementations.

\section{More Related Work}
While the original constant time rank data structure~\cite{Jacobson1989LOUDS} requires just \(O(n\log\log n/\log n)\) bits of space in addition to the bit vector, it utilizes two levels in addition to a lookup table.
This results in subpar practical results.
Providing faster and more space-efficient rank (and select) data structures has been an active line of research~\cite{GogBMP2014SDSL,GonzalezGMN2005PracticalRankSelect,KimNKP2005RankAndSelect,NavarroP2012CombinedSampling,Vigna2008BroadwordRankSelect} culminating in data structures just 3.5\,\% space-overhead~\cite{Kurpicz2022PastaFlat,ZhouAK2013PopcountRankSelect}.

The first constant time select data structure~\cite{ClarkM1996Select} requires \(O(n/\log\log n+\sqrt{n}\log n\log\log n)\) bits of space in addition to the bit vector.
It has prohibitively high constants in practice.
The work on practical select data structures shows a clear space-time trade-off ranging from very fast~\cite{Vigna2008BroadwordRankSelect} to space-efficient~\cite{Kurpicz2022PastaFlat,ZhouAK2013PopcountRankSelect}.
For more information on practical rank-select data structures, we refer to our experimental evaluation in \cref{sec:experimental_evaluation}.

While outside the scope of this paper, we also want to mention additional interesting work:
First, there is a lower bound of \(\Omega(\log n/\log\log n)\) amortized time for rank and select queries on dynamic bit vectors~\cite{FredmanS1989WorstCaseDynamicBitVectorRankSelect}.
A length-\(n\) bit vector with \(m\) \texttt{1}-bits requires at least \(S=\ceil{\log\binom{n}{m}}\) bits of space.
The most space-efficient compressed representation of a bit vector with constant time rank and select queries requires \(S+O(n/\textnormal{poly}\log n)\) bits of space~\cite{Patrascu2008Succincter}.
In practice, the RRR encoding is oftentimes used.
Here, a bit vector with constant time rank and select support can be encoded in \(S+O(n\log\log n/\log n)\) bits of space~\cite{RamanRS2007RRR}.

\newcommand{\mTwo}{\textsc{m2}}
\newcommand{\nameDS}{\ensuremath{3^\star}}
\newcommand{\nameDStwo}{\ensuremath{2^\star}}
\newcommand{\ourMax}{\textnormal{max}}
\newcommand{\afast}{a_{\mathrm{fast}}}

\section{\nameDS: 3-Level Bit-Compressed Sample Tree}\label{s:ThreeStar}
We explored a large design space for select data structures in \cref{sec:design_and_analysis}.
We now zoom in on a ``sweet slice'' of this design space.
We start with a high-level description of a 3-level sample tree.
When answering \(\select{1}{i}\), it returns a superblock close to the one containing the \(i\)-th \texttt{1}-bit.
To this end, we sample the superblock containing every \(a\)-th \texttt{1}-bit in the \emph{top-level} \(T\).
Then, the \(i\)-th \texttt{1}-bit is in a superblock in the range \(r= T[j]..T[j+1]\) with \(j=\floor{i/a}\).
If \(|r|<\alpha\) for a tuning parameter \(\alpha\), we scan the superblocks directly.
For \(|r|\geq\alpha\), we store a \emph{mid-level} sample the superblocks of every \(b\)-th \texttt{1}-bit.
We identify a subrange \(r^\prime\) of superblocks and if \(|r^\prime|<\alpha\), we scan them directly.
Otherwise, we sample the superblock of every \texttt{1}-bit not covered by the first two levels in the \emph{bot-level}.

\newcommand{\flOffset}{\ensuremath{o}}
\newcommand{\slOffset}{\ensuremath{{\flOffset^\Delta}}}
\newcommand{\tlOffset}{\ensuremath{{\flOffset^\Delta_\Delta}}}
\newcommand{\queryIdx}{i}

\subsection{Structure of the Sample Tree}
\label{sec:description_summary_tree}
We now describe the structure of the sample tree in a bottom-up fashion, i.e., we start at the last, the bot-level, containing the highest resolution samples on the bit vector.
The reason for this is that some values in higher levels of the tree depend on information on lower levels, resulting in dependencies that are otherwise hard to explain.
The general idea of \nameDS\ remains the same as previously explained, we want to compute a small range of superblocks that contain the sought \texttt{1}-bit.
Our sample tree uses three parameters: the sample rate at the top-level \(a=a_0\), the sample rate at the mid-level \(b=a_1\), and a limit $\alpha$ on the number of superblocks that we are willing to search.
See \cref{fig:example_sample_tree} for an overview of the levels and \cref{tab:notation} for the used symbols.

For compression, we store information packed with as few bits as possible.
We denote the bit-width necessary to store entries at the mid-level by \(\rho\) and at the bottom-level by \(\rho^\prime\).
Similarly, the gap sizes at the top-level and mid-level are denoted by \(r\) and \(r^\prime\), resp.
Also, at the bot-level and mid-level there are families of arrays with different bit-widths \(M_\rho\) and \(B_{\rho^\prime}\).

\begin{figure}[t]
  \centering
\newcommand{\textover}[3][l]{%
  \makebox[\widthof{#3}][#1]{#2}%
}
\newcommand{\mathtextover}[3][l]{\mathmakebox[\widthof{\(#3\)}][#1]{#2}}

  \begin{tikzpicture}[my-array/.style={
      rectangle split, rectangle split parts=#1, draw, rectangle split horizontal,minimum height=.7cm},%
    fill_node/.style 2 args={draw=#1,fill=#1,rectangle,minimum height=#2, minimum width=#2,inner sep=0cm}]

    \node (top_label) {\(T[\bm{i}/a]\)};
    
    \node[below=2.9cm of top_label.west, anchor=west] (mid_label) {\(M_\rho[g\frac{a}{b}+(\bm{i}\%a)/b]\)};

    \node[below=2.6cm of mid_label.west, anchor=west] (p_label) {\(P_\rho[g]\)};

    \node[below=1cm of p_label.west, anchor=west] (k_label) {\(K_{\hat{\rho}}[h+\rho^\prime-\ceil{\log\alpha}]\)};

    \node[below=1.2cm of k_label.west, anchor=west] (bot_label) {\(B_{\rho^\prime}[(c+C_{\rho^\prime})b+\bm{i}\%b]\)};

    \node[right=4cm of top_label.west, my-array=2] (top_level) {\(\mathtextover[c]{\flOffset~\vert~g~\vert~\kappa}{\flOffset^\prime~\vert~g^\prime~\vert~\kappa^\prime}\)\nodepart{two}\textcolor{my-blue-grey}{\(\flOffset^\prime~\vert~g^\prime~\vert~\kappa^\prime\)}};

    \node[below right=.75cm and .25cm of top_level.west, anchor=west] (top_calc_one) {\footnotesize\(r=\flOffset^\prime-\flOffset\)};
    \node[below=1.2cm of top_calc_one.west, anchor=west] (top_calc_two) {\footnotesize\(\rho=\ceil{\log r}+\kappa\)};
    \node[below right=.15cm and -1.485cm of top_calc_one,diamond,draw=my-amber,inner sep=0cm] (r_check) {\tiny \(<\alpha\)};
      \node[right=.5cm of r_check] (top_return) {\texttt{return} \flOffset};

    \node[right=4cm of mid_label.west, my-array=2] (mid_level) {\(\mathtextover[c]{\slOffset~\vert~c}{\slOffset^\prime~\vert~c^\prime}\)\nodepart{two}\textcolor{my-blue-grey}{\(\slOffset^\prime~\vert~c^\prime\)}};

    \node[below right=.75cm and .25cm of mid_level.west, anchor=west] (mid_calc_one) {\footnotesize\(r^\prime=\slOffset^\prime-\slOffset\)};
    \node[below=1.2cm of mid_calc_one.west, anchor=west] (mid_calc_two) {\footnotesize\(\rho^\prime=\ceil{\log r^\prime}\)};
        \node[below right=.15cm and -2cm of mid_calc_one,diamond,draw=my-orange,inner sep=0cm] (r_prime_check) {\tiny \(<\alpha\)};
      \node[right=.5cm of r_prime_check] (mid_return) {\texttt{return} \(\flOffset+\slOffset\)};
    
    \node[right=4cm of p_label.west, my-array=2] (p_level) {\(\mathtextover[c]{\hat{\rho}~\vert~h}{\hat{\rho}^\prime~\vert~h^\prime}\)\nodepart{two}\textcolor{my-blue-grey}{\(\hat{\rho}^\prime~\vert~h^\prime\)}};

    \node[right=4cm of k_label.west, my-array=5] (k_level) {\(\mathtextover[c]{C_{\ceil{\log\alpha}}}{C_{\log\alpha\rceil}}\)\nodepart{two}\footnotesize\(\dots\)\nodepart{three}\(C_{\rho^\prime}\)\nodepart{four}\footnotesize\(\dots\)\nodepart{five}\(\mathtextover[c]{C_{\hat{\rho}}}{C_{\rho^\prime}}\)};

    \node[right=4cm of bot_label.west, my-array=2] (bot_level) {\(\mathtextover[c]{\tlOffset}{\tlOffset^\prime}\)\nodepart{two}\(\textcolor{my-blue-grey}{\tlOffset^\prime}\)};

    \node[below right=.125cm and -.45cm of bot_level] (bot_return) {\texttt{return} \(\flOffset+\slOffset+\tlOffset\)};

    \node[left=0cm of top_level] {\footnotesize\(\dots\)};
    \node[right=0cm of top_level] {\footnotesize\(\dots\)};
    \node[left=0cm of mid_level] {\footnotesize\(\dots\)};
    \node[right=0cm of mid_level] {\footnotesize\(\dots\)};
    \node[left=0cm of p_level] {\footnotesize\(\dots\)};
    \node[right=0cm of p_level] {\footnotesize\(\dots\)};
    \node[left=0cm of k_level] {\footnotesize\(\dots\)};
    \node[right=0cm of k_level] {\footnotesize\(\dots\)};
    \node[left=0cm of bot_level] {\footnotesize\(\dots\)};
    \node[right=0cm of bot_level] {\footnotesize\(\dots\)};

    \begin{scope}[on background layer]
      \fill[my-lipics-grey!10] (-.55,-2.25) rectangle ++(8.45,2.9);
      \fill[my-lipics-grey!10] (-.55,-7) rectangle ++(8.45,4.6);
      \fill[my-lipics-grey!10] (-.55,-8.85) rectangle ++(8.45,1.7);

      \node[fill=my-grey,anchor=north] at (7.45,.65) {\textcolor{white}{Top}};
      \node[fill=my-grey,anchor=north] at (7.45,-2.4) {\textcolor{white}{Mid}};
      \node[fill=my-grey,anchor=north] at (7.45,-7.15) {\textcolor{white}{Bot}};

      \node[fill_node={my-yellow}{.3cm}] (rho_top) at ($(top_calc_two)+(-.89,0)$) {};
      \node[fill_node={my-yellow}{.175cm}] (rho_mid) at ($(mid_label)+(-.9725,-.075)$) {};
      \node[fill_node={my-yellow}{.175cm}] (rho_p) at ($(p_label)+(-.09,-.09)$) {};
      \path[draw=my-yellow] (rho_top) to[out=180,in=0]  ($(rho_top)+(-3.75,0)$) to[out=180,in=90] (rho_mid);
      \path[draw=my-yellow] (rho_mid) to[out=270,in=90] (rho_p);

      \node[fill_node={my-orange}{.4cm}] (kappa_top) at ($(top_level)+(-.375,0)$) {};
      \node[fill_node={my-orange}{.3cm}] (kappa_calc) at ($(top_calc_two)+(.9,0)$) {};
      \path[draw=my-orange] (kappa_top) to[out=270,in=90] ($(kappa_calc)+(0,.9)$) to[out=270,in=90] (kappa_calc);

      \node[fill_node={my-amber}{.3cm}] (r_top) at ($(top_calc_one)+(-.575,0)$) {};
      \node[fill_node={my-amber}{.3cm}] (r_calc) at ($(top_calc_two)+(.24,0)$) {};
      \path[draw=my-amber] (r_check) to[out=0,in=180] node[pos=.5,above,yshift=-.1cm] {\tiny yes} (top_return);
      \path[draw=my-amber] (r_top) to[out=270,in=90] (r_check);
      \path[draw=my-amber] (r_check) to[out=270,in=90,looseness=.55] node[pos=.5,above,yshift=-.1cm] {\tiny no} (r_calc);

      \node[fill_node={my-light-blue}{.4cm}] (offset_top) at ($(top_level)+(-1.4,0)$) {};
      \node[fill_node={my-light-blue}{.4cm}] (offset_prime_top) at ($(top_level)+(.25,0)$) {};
      \node[fill_node={my-light-blue}{.3cm},minimum width=.8cm] (offset_minus) at ($(top_calc_one)+(.275,0)$) {};
      \node[fill_node={my-light-blue}{.3cm}] (offset_top_return) at ($(top_return)+(.62,0)$) {};
      \path[draw=my-light-blue] (offset_top) to[out=270, in=90, looseness=.6] (offset_minus);
      \path[draw=my-light-blue] (offset_prime_top) to[out=270, in=90] (offset_minus);
      \path[draw=my-light-blue] (offset_minus) to[out=0,in=90,looseness=.6] (offset_top_return);

      \node[fill_node={my-light-green}{.4cm}] (g_top) at ($(top_level)+(-.9,0)$) {};
      \node[fill_node={my-light-green}{.25cm}] (g_mid) at ($(mid_label)+(-.705,-.025)$) {};
      \node[fill_node={my-light-green}{.25cm}] (g_p) at ($(p_label)+(.2,-.05)$) {};
      \path[draw=my-light-green] (g_top) to[out=270,in=90,looseness=.5] ($(g_mid)+(0,1.85)$) to[out=270,in=90] (g_mid);
      \path[draw=my-light-green] (g_mid) to[out=270,in=90] (g_p);

      \node[fill_node={my-amber}{.3cm}] (r_mid) at ($(mid_calc_one)+(-.8125,0)$) {};
      \node[fill_node={my-amber}{.3cm}] (r_prime_calc) at ($(mid_calc_two)+(.55,0)$) {};
      \path[draw=my-amber] (r_prime_check) to[out=0,in=180] node[pos=.5,above,yshift=-.1cm] {\tiny yes} (mid_return);
      \path[draw=my-amber] (r_mid) to[out=270,in=90] (r_prime_check);
      \path[draw=my-amber] (r_prime_check) to[out=270,in=90,looseness=.55] node[pos=.5,above,yshift=-.1cm] {\tiny no} (r_prime_calc);
      
      \node[fill_node={my-light-blue}{.45cm}] (offset_mid) at ($(mid_level)+(-.95,0)$) {};
      \node[fill_node={my-light-blue}{.45cm},minimum width=.6cm] (offset_prime_mid) at ($(mid_level)+(.4,0)$) {};
      \node[fill_node={my-light-blue}{.3cm},minimum width=1.25cm] (offset_mid_minus) at ($(mid_calc_one)+(.3,0)$) {};
      \node[fill_node={my-light-blue}{.4cm}] (offset_prime_mid_return) at ($(mid_return)+(.9,0)$) {};
      \node[fill_node={my-light-blue}{.4cm},minimum width=.3cm] (offset_mid_return) at ($(mid_return)+(.2,0)$) {};
      \path[draw=my-light-blue] (offset_mid) to[out=270, in=90, looseness=.6] (offset_mid_minus);
      \path[draw=my-light-blue] (offset_prime_mid) to[out=270, in=90] (offset_mid_minus);
      \path[draw=my-light-blue] (offset_mid_minus) to[out=0,in=90,looseness=.6] (offset_prime_mid_return);
      \path[draw=my-light-blue] (offset_top_return) to[out=280,in=80] (offset_mid_return);

      \path[draw=my-amber] (r_prime_check) to[out=0,in=180] node[pos=.5,above,yshift=-.1cm] {\tiny yes} (mid_return);
      \path[draw=my-amber] (r_mid) to[out=270,in=90] (r_prime_check);

      \node[fill_node={my-yellow}{.3cm}] (rho_prime) at ($(mid_calc_two)+(-.7,0)$) {};
      \node[fill_node={my-yellow}{.4cm}] (rho_prime_k) at ($(k_label)+(-.2,0)$) {};
      \node[fill_node={my-yellow}{.19cm}] (rho_prime_bot) at ($(bot_label)+(-1.25,-.075)$) {};
      \path[draw=my-yellow] (rho_prime) to[out=180,in=90] (rho_prime_k);
      \path[draw=my-yellow] (rho_prime_k) to[out=270,in=90,looseness=1.2] (rho_prime_bot);

      \node[fill_node={my-lime}{.4cm}] (h) at ($(p_level)+(-.325,0)$) {};
      \node[fill_node={my-lime}{.4cm},minimum width=.25cm] (h_k) at ($(k_label)+(-.875,0)$) {};
      \path[draw=my-lime] (h) to[out=270,in=90,looseness=.4] (h_k);

      \node[fill_node={my-cyan}{.4cm}] (rho_hat) at ($(p_level)+(-.85,0)$) {};
      \node[fill_node={my-cyan}{.19cm}] (rho_hat_k) at ($(k_label)+(-1.16,-.075)$) {};
      \path[draw=my-cyan] (rho_hat) to[out=180,in=90,looseness=.8] (rho_hat_k);

      \node[fill_node={my-green}{.4cm}] (c) at ($(mid_level)+(-.275,0)$) {};
      \node[fill_node={my-green}{.4cm},minimum width=.25cm] (c_bot) at ($(bot_label)+(-.75,0)$) {};
      \path[draw=my-green] (c) to[out=270,in=90,looseness=.5] ($(c_bot)+(0,3.75)$) to[out=270,in=90] (c_bot);

      \node[fill_node={my-teal}{.4cm}, minimum width=.5cm] (C_rho_prime) at ($(k_level)+(.25,0)$) {};
      \node[fill_node={my-teal}{.4cm}, minimum width=.5cm] (C_rho_prime_bot) at ($(bot_label)+(-.05,0)$) {};
      \path[draw=my-teal] (C_rho_prime) to[out=270,in=90,looseness=.4] (C_rho_prime_bot);

      \node[fill_node={my-light-blue}{.45cm}] (offset_bot) at ($(bot_level)+(-.38,0)$) {};
      \node[fill_node={my-light-blue}{.4cm}] (offset_prime_prime_bot_return) at ($(bot_return)+(1.35,0)$) {};
      \node[fill_node={my-light-blue}{.4cm}] (offset_prime_bot_return) at ($(bot_return)+(.45,0)$) {};
      \node[fill_node={my-light-blue}{.4cm},minimum width=.3cm] (offset_bot_return) at ($(bot_return)+(-.25,0)$) {};
      \path[draw=my-light-blue] (offset_bot) to[out=270,in=90,looseness=.25] (offset_prime_prime_bot_return);
      \path[draw=my-light-blue] (offset_mid_return) to[out=280,in=80] (offset_bot_return);
      \path[draw=my-light-blue] (offset_prime_mid_return) to[out=260,in=100] (offset_prime_bot_return);
    \end{scope}
  \end{tikzpicture}
  \caption{
    Structure of our 3-level sample tree \nameDS\ including data-flow of a \(\select{1}{\bm{i}}\)-query.
    In light gray, we give the next entry of an array.
    Best viewed in color.}
  \label{fig:example_sample_tree}
\end{figure}

\subsubsection*{Bottom-Level}
An \emph{entry} in the bot-level \(B_{\rho^\prime}\) of \nameDS\ contains an offset \tlOffset.
In \(B_{\rho^\prime}\), \(b\) consecutive entries form a \emph{bot-group}, i.e., they are the result of an entry on the mid-level being split because its gap is too big.

\subsubsection*{Middle-Level}
An entry in the mid-level \(M_\rho\) contains an offset \(\slOffset\) and a counter \(c\). A mid-group is then formed by $\frac{a}{b}$ consecutive such entries. To navigate from a mid-entry to its bot-group with bit-width $\rho^\prime$, we have to know the number of preceding bot-groups with the same bit-width \(\rho^\prime\), i.e. the number of preceding mid-level entries with gap sizes in the range \(2^{\rho^\prime}..2^{{\rho^\prime}+1}-1\). To this end, the counter \(c\) stores the number of such preceeding mid-level entries within the mid-group itself.
However, we also have to know the global count of such preceeding entries before the mid-group, denoted $C_{\rho^\prime}$.
To do so, for each mid-group, we store these global counts for all required ${\rho^\prime}$.
Fortunately, only the range \(\ceil{\log\alpha}..\hat{\rho}\) is relevant for a given mid-group, where \(\hat{\rho}\) is the maximum bit-width ${\rho^\prime}$ caused by any split in that mid-group. Thus, only global counts up to  $C_{\hat\rho}$ need to be stored.
We store these in a family of arrays \(K_{\hat{\rho}}\).
Now, we have to map each mid-group to its global counters.
For this, we store for the mid-group the value \(\hat{\rho}\) and the index \(h\) of the start of its mapped counters in \(K_{\hat{\rho}}\). The indirection into $K_{\hat\rho}$ allow us to allocate only the exact amount of space required and still keep control over all entry locations.%
\footnote{These counters can be replaced by a minimal perfect hash function (MPHF). This results in further compression as MPHF require around 1.5 bits per entry \cite{lehmann2025combinedsearchencodingseeds,Mairson1983LowerBoundMPHF,Mehlhorn1982LowerBoundMPHF}. We refrain from this to avoid the overhead of evaluating the MPHF.}

\subsubsection*{Top-Level}
Finally, we are at the top-level.
Here, each entry consists of an offset \(\flOffset\), a group index \(g\), and a bit-width-counter \(\kappa\).
Let \(r\) be the gap of an entry.
If \(r\geq\alpha\), it is split into the \(g\)-th mid-group that requires \(\rho=\ceil{\log r}+\kappa\) bits of space per mid-entry, where \(\kappa=\log\ceil{\max \{c,c^\prime \dots\}}\) is the size needed to store the maximum local counter $c$ in that mid-group.

\subsubsection*{Following a Query}
\Cref{fig:example_sample_tree} illustrates the path of a \(\select{1}{\queryIdx}\)-query in \nameDS\@.
We start at the top-level with \((\flOffset,g,\kappa)=T[\queryIdx/a]\).
Assuming the gap size \(r\) is at least \(\alpha\), we compute the size of the mid-level entries \(\rho=\ceil{\log r}+\kappa\).
On the mid-level, we are now interested in the entry \((\slOffset,c)=M_\rho[\underbrace{g\frac{a}{b}}_{\textnormal{group}}+\underbrace{(\queryIdx
  \%a)/b\vphantom{\frac{a}{b}}}_{\textnormal{\makebox[0mm][c]{entry in group}}}]\).
We again assume that the gap size \(r^\prime\geq\alpha\).
Let \(\rho^\prime=\ceil{\log r^\prime}\).

We now have to identify how many preceding bot-groups with bit-width \(\rho^\prime\) exist. 
To find that global count, we use that \(\rho^\prime\in\ceil{\log\alpha}..\hat{\rho}\), i.e., our $C_{\rho^\prime}$ is the ($\log r^\prime-\log\alpha$)-th entry within the mapped counters: \(C=K_{\hat{\rho}}[\underbrace{h}_{\textnormal{\makebox[0pt]{start}}}+\underbrace{\ceil{\log r^\prime}-\ceil{\log\alpha}}_\textnormal{specific entry}]\).
Finally, with offset at the bot-level \( \tlOffset = B_{\rho^\prime}[\underbrace{(c+C)b}_{\textnormal{group}}+\underbrace{(k\%b)}_{\textnormal{\makebox[0pt]{entry in group}}}]\), we can start searching for the bit in the \((\flOffset+\slOffset+\tlOffset)\)-th superblock.

\subsection{Space Analysis}\label{ss:spaceCompressed}
We first bound the space required by \nameDS\ for arbitrary sample rates \(a\) and \(b\).
We then find values \(a^\star\) and \(b^\star\) that minimize this bound and give an upper bound for the space required by it.
Throughout this section, we consider \(\alpha\in O(1)\).

\begin{lemma}
  \label{lem:with_a_and_b}
  Up to lower order terms, the sample tree \nameDS\ requires \(\nameDS_\textnormal{space}\)
  \begin{align*}
	 & \leq \underbrace{t \left(\log\frac{n}{L} + \log G + \log \log \frac{a}{b}\right)}_{\textnormal{top-level}} + \underbrace{G \left( \log \frac{s}{G} + 1 \right) \log s}_{\textnormal{mid-level}}\\
		       & + \underbrace{G \left( 2\frac{a}{b}\log \frac{\alpha s}{G}+ \log \log \frac{\alpha s}{G} + \log G \right)}_{\textnormal{mid-level}} 
	    + \underbrace{sb \log \alpha\vphantom{\left(\frac{\alpha s}{G}\right)}}_{\textnormal{bot-level}}
\end{align*}
bits of space, where $s =\frac{n}{\alpha L}$, $t=\frac{m}{a}$, and \(G=\min\{s,t\}\).
\end{lemma}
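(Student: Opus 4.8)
The plan is to account for the space of each of the three levels of $\nameDS$ separately, summing contributions from the arrays $T$, $M_\rho$, $P_\rho$, $K_{\hat\rho}$, and $B_{\rho'}$ described in Section~\ref{sec:description_summary_tree}. First I would fix the relevant counts. The top-level $T$ has $t=m/a$ entries, one per sampled $a$-th \texttt{1}-bit. Each entry stores an offset $\flOffset$ into the $n/L$ superblocks (so $\log(n/L)$ bits), a group index $g$ ranging over the number $G$ of mid-groups that are actually created (so $\log G$ bits), and the bit-width-counter $\kappa=\ceil{\log\ceil{\max\{c,c',\dots\}}}$, which is the log of the largest local counter in a mid-group of $a/b$ entries, hence at most $\log\log(a/b)$ bits. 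Summing over all $t$ entries gives the first bracketed term $t(\log(n/L)+\log G+\log\log(a/b))$. Crucially $G\le s=n/(\alpha L)$ because a mid-group is only created for a large gap (range $\ge\alpha$), of which there are at most $s$ by the observation in Section~\ref{ss:designSpace}; and $G\le t$ trivially since each mid-group comes from a top-level entry. Hence $G=\min\{s,t\}$ as in the statement.

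Next I would bound the mid-level, which splits into two array families. The arrays $M_\rho$ hold $a/b$ entries per mid-group, each entry storing $\slOffset$ (an offset into at most $\alpha s/G$-ish superblocks within the gap — more carefully, each mid-group spans a top-level gap, and across all $G$ groups these gaps partition $\le s$ large ranges worth of superblocks, so by convexity the per-entry offset costs $O(\log(\alpha s/G))$) and the local counter $c$, which costs $\kappa\le\log\log(\alpha s/G)$ bits; multiplying by $G$ mid-groups and $a/b$ entries yields the $G(2\frac ab\log\frac{\alpha s}{G}+\dots)$ term, with the additive $\log\log\frac{\alpha s}{G}+\log G$ absorbing the per-group bookkeeping ($\hat\rho$ and $h$, i.e. the $P$-array entries). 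The second mid-level term $G(\log\frac sG+1)\log s$ accounts for the counter arrays $K_{\hat\rho}$: each mid-group stores global counts $C_{\rho'}$ for $\rho'$ in the range $\ceil{\log\alpha}..\hat\rho$, i.e. $O(\log s)$ distinct values each of magnitude $\le s$ hence $O(\log s)$ bits — but one must be careful that the bit-widths used are the tight $\hat\rho$-dependent ones, and that the total number of such entries, summed over groups, telescopes; I would bound the number of $\rho'$-values per group crudely by $\log\frac sG+1$ (since within one group the gap sizes, and hence the exponents $\rho'$, range over an interval determined by that group's share of the $s$ ranges), times $\log s$ bits each, times $G$ groups.

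Finally the bot-level: $B_{\rho'}$ holds $b$ entries per bot-group, one bot-group per mid-entry whose sub-gap $r'$ exceeds $\alpha$; there are at most $s$ such sub-gaps in total (again the "at most $s$ large ranges" bound, now applied one level down), and each bot-entry $\tlOffset$ is an offset into a range of $<\alpha$ superblocks, costing $\ceil{\log\alpha}$ bits; multiplying gives $sb\log\alpha$. Assembling the four displayed terms and discarding lower-order additive contributions (e.g. the $+O(1)$ per entry from ceilings, and the $K$-array indirection pointers, which are subsumed by $\log G$ or $\log s$ factors already present) completes the bound. The main obstacle I expect is the mid-level offset and counter accounting: one has to argue that although the per-group "universe" sizes $\alpha s/G$ and local-counter maxima vary across groups, replacing each by its worst case $\alpha s/G$ (resp.\ $\log(a/b)$) only loses lower-order terms, and that the $K_{\hat\rho}$ family — whose bit-widths are data-dependent and whose lengths are determined by the $\hat\rho$ of each group — can be uniformly charged $O(\log s)$ bits per stored count without a hidden blowup from the variable indexing. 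The offsets themselves telescope cleanly because consecutive samples are monotone, so the real care is in the counter arrays and in verifying that $G=\min\{s,t\}$ is the right normalization everywhere a $\log\frac{\cdot}{G}$ appears.
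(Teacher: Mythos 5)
Your overall plan---a level-by-level accounting of $T$, $M_\rho$, $P_\rho$, $K_{\hat\rho}$, and $B_{\rho'}$, using convexity of the logarithm to push the gap sizes to their extremal values under the constraints ``at most $s$ splits'' and ``at most $n/L$ superblocks in total''---is exactly the paper's proof, and your treatment of the top level, of the identity $G=\min\{s,t\}$, of the $P$-array bookkeeping, and of the $K_{\hat\rho}$ counters matches it. However, two of your per-entry bit counts are wrong, and in both cases the error is an \emph{undercount}, which is fatal in a proof of a space upper bound: as written you bound a quantity smaller than what the data structure actually stores.

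First, the mid-level counter. You charge $c$ only $\log\log(\alpha s/G)$ bits, but that quantity is the width of the field $\kappa$ kept at the \emph{top} level (which records how many bits $c$ occupies); the counter itself occupies $\kappa=\log c^{\max}_g$ bits in every mid-entry, and in the worst case $c^{\max}_g$ is as large as $s/G$ (all $s$ splits funneled through the $G$ mid-groups). The factor $2$ in $2\frac{a}{b}\log\frac{\alpha s}{G}$ exists precisely to absorb this: the offset costs $\log\frac{\alpha s}{G}$ and the counter costs $\log\frac{s}{G}\le\log\frac{\alpha s}{G}$ per entry. Under your accounting the factor $2$ is unmotivated and the true counter cost is unaccounted for. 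Second, the bot-level. You assert that each $\tlOffset$ is an offset into a range of fewer than $\alpha$ superblocks and hence costs $\ceil{\log\alpha}$ bits---but a bot-group is created exactly when $r'\ge\alpha$, so each of its $b$ entries costs $\log r'\ge\log\alpha$ bits, not at most that. The term $sb\log\alpha$ is \emph{not} a per-entry bound times a count; it is the maximum of $b\sum_i\log r'_i$ over at most $s$ splits whose gaps sum to at most $n/L=\alpha s$, attained (by concavity of $\log$) when all $r'_i=\alpha$. Without that extremal argument the bot-level term does not follow: a single split with $r'\gg\alpha$ already breaks your per-entry budget, and only the trade-off ``larger gaps force fewer splits'' rescues the total.
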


\begin{proof}
  To simplify calculation, we ``pretend'' that storing elements in a range $0..U-1$ can be stored using
  $\log U$ bits; ignoring rounding up to $\lceil\log U\rceil$ as required for bit-compression.
  This is admissible as, in our calculations, a more detailed calculation only adds lower order terms to the
  space consumption.

  \textbf{Top-Level:} There are $t$ top-groups.
  For each, we store the offset of the superblock, of which there are \(n/L\) many, in \(\log\frac{n}{L}\) bits. 
  The mid-group reference $g$ can count at most up to $G$, the number of mid-groups, so we store it in $\log G$ bits. Finally, $\kappa$ stores the number of bits needed for the largest $c$ in the mid-group created by this top-group.
  Since $c < \frac{a}{b}$ (each mid-group has $\frac{a}{b}$ entries), \(\kappa\leq\log\frac{a}{b}\), and we can store it in \(\log\log\frac{a}{b}\) bits.

  \textbf{Mid-Level:}
  The mid-groups require \(\sum_{g=1}^G\frac{a}{b}\rho_g\) bits of space, where \(\rho_g=\log r_g+\log c^\ourMax_g\) is the number of bits necessary to store any entry of the group.
Here, \(r_g\) is the gap of the top-entry split into the mid-group and \(c^\ourMax_g\) is the greatest counter-value within the mid-group \(g\).
Since the logarithm is convex, the first term is maximized if there are many small gaps, i.e., \(r_g\approx\alpha\).
However, we have to be careful.
There are at most \(s=\frac{n}{\alpha L}\) gaps of size \(\alpha\) and at most \(t=\frac{m}{a}\) top-groups. Both constrain the number of mid-groups to \(G=\min\{s,t\}\), since mid-groups can only created by top-groups, and at most $s$ top-groups can ever split. In case of $G=s$, every top-level gap can be minimal, i.e. $r_g = \alpha$. In the case of $G=t$, however, to maximize the sum, we must 
set  $r_g = \alpha \frac{s}{t}$, since space for $s$ bot-level splits passes through $t<s$ mid-groups.
We capture both cases in general by setting $r_g = \alpha \frac{s}{G}$.
Similarly, we maximize the counter-value by allowing for the most splits in each mid-group, i.e., \(c^\ourMax_g=\frac{s}{G}\).
Overall, we need \(G\frac{a}{b}(\log\frac{\alpha s}{G}+\log\frac{s}{G})\leq 2G\frac{a}{b}\log\frac{\alpha s}{G}\) bits of space for the family of arrays \(M_\rho\).

For each mid-group \(g\), we store the maximum \(\rho^\prime\)-value \(\hat{\rho}_g\) and the start \(h\) of the counters \(C_{\ceil{\log \alpha}},\dots,C_{\hat{\rho}_g}\).
This requires \(G\log G+ \sum_{g=1}^G\log\log\max_g\hat{\rho}_g\) bits of space.
The start \(h\) can be at most as big as there are mid-groups and there are \(G\) many.
Interestingly, we have \(\hat{\rho}_g=\frac{\alpha s}{G}\) due to maximizing the space for mid-entries, which does not maximize this term.
Fortunately, it enters only as lower-order term in the space of \(G(\log G +\log\log\frac{\alpha s}{G})\) bits that are necessary for the family of arrays \(P_\rho\).

Finally, for the counters \(C_{\ceil{\log\alpha}},\dots,C_{\hat{\rho}_g}\), we need at most \(\sum_{g=1}^G(\hat{\rho}_g-\ceil{\log\alpha}+1)\log s\) bits of space, as each counter can count at most to $s$.
Since \(\hat{\rho}_g\) depends on the size of the gaps on the mid-level, which can never be greater than the gaps on the top-level, i.e. $r'_g \leq r_g$, the maximum of the sum cannot be greater than what we get by choosing  \(\hat{\rho}_g=\rho_g=\frac{\alpha s}{G}\).
Hence, we need at most \(G(\log\frac{\alpha s}{G}-\log\alpha + 1)\log s\) bits of space for the family of arrays \(K_{\hat{\rho}}\).

  \textbf{Bot-Level:}   When an entry at the mid-level is split (when \(r^\prime \geq \alpha\)), we store \(b\) offsets using \(\log r^\prime\) bits of space, each.
  There can only exist  \(\leq s\) entries in the mid-level that split, thus the bot-level's space is \(\leq b\sum_{i=1}^s\log r^\prime_i\), where \(r^\prime_i\) is the gap of the \(i\)-th split entry in the mid-level.
  This is maximized (convexity of the logarithm) for \(r^\prime_i=\alpha\) for all \(i\) in \(1..s\), resulting in the claimed size.
\end{proof}

Assuming \(G=t\), we can show that \(\nameDS_\textnormal{space}\in O(n\frac{\sqrt{\log\log n}\sqrt{\log\log\log n}}{\log n})\) for the case $k=2$, see~\cref{sec:special_case_space}. However, \(G=t\) is indeed only space-optimal for \(k=2\).
For \(k\geq 3\), we always have \(G=s\) if we want to be space-optimal, which is the only case we consider in the following, due to space constraints and as we get a more general result for arbitrary \(k\).

\begin{lemma}
  To minimize the space in \cref{lem:with_a_and_b} when using a \(k\)-level summary, the optimal choices for \(a\) and \(b\), are:
  \[a^\star=\left(\frac{3m}{\sqrt{2}n}\frac{\alpha}{\log\alpha}L\log\frac{n}{L}\right)^{2/3}\in O\!\left(L\frac{(\log\log n)^{(k-1)/3}}{(\log n)^{(k-2)/3}}\right)\]
  and
  \[b^\star=(2a^\star)^{1/2}\in O\!\left(\sqrt{L}\frac{(\log\log n)^{(k-1)/6}}{(\log n)^{(k-2)/6}}\right).\]
\end{lemma}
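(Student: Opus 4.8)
The plan is to specialize the space bound of \cref{lem:with_a_and_b} to the regime $G=s$ — which, as argued above, is the space-optimal regime whenever $k\geq 3$ — strip it down to its leading-order part as a function of $a$ and $b$, and then minimize by elementary calculus.

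Substituting $G=s$ makes $\log\frac{s}{G}=0$, $\log\frac{\alpha s}{G}=\log\alpha$, and $\log G=\log s$, so the bound collapses to
\[
  \tfrac{m}{a}\bigl(\log\tfrac{n}{L}+\log s+\log\log\tfrac{a}{b}\bigr)\;+\;\Theta(s\log s)\;+\;2s\tfrac{a}{b}\log\alpha\;+\;sb\log\alpha .
\]
Since $\alpha\in O(1)$, the terms $\Theta(s\log s)$ and $\log\log\frac{a}{b}$ and $\log\log\alpha$ do not matter: the first group does not depend on $a$ or $b$ (and, because $s\le t=\frac{m}{a}$ in this regime, is anyway dominated by the top-level term), while the iterated logarithms are lower order once we anticipate that the optimum has $a,b$ polylogarithmic in $n$. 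What is left to minimize is, up to lower-order terms, an expression of the shape
\[
  F(a,b)\;=\;c_{1}\,\tfrac{m}{a}\log\tfrac{n}{L}\;+\;c_{2}\,\tfrac{sa}{b}\log\alpha\;+\;c_{3}\,sb\log\alpha,
\]
with absolute constants $c_{1},c_{2},c_{3}$ read off from \cref{lem:with_a_and_b}, and with $s=\frac{n}{\alpha L}$ and $t=\frac{m}{a}$.

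Now I would minimize $F$. Fixing $a$ and setting $\partial_{b}F=0$ gives $b^{2}=\tfrac{c_{2}}{c_{3}}a$, which with the (intentionally slightly loosened) constants of \cref{lem:with_a_and_b} is exactly $b^{\star}=(2a)^{1/2}$, matching the claim. Substituting this back merges the two $\log\alpha$-terms into a single $\Theta(s\sqrt{a}\log\alpha)$ term, leaving a convex one-variable function $\frac{C}{a}+D\sqrt{a}$ on $(0,\infty)$ with $C=\Theta(m\log\frac{n}{L})$ and $D=\Theta(s\log\alpha)$; since this blows up at both ends of the interval, its unique critical point, obtained from $-\frac{C}{a^{2}}+\frac{D}{2\sqrt{a}}=0$, i.e. $a^{3/2}=2C/D$, is the global minimizer. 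Substituting $s=\frac{n}{\alpha L}$ and gathering constants then gives the stated $a^{\star}=\bigl(\tfrac{3m}{\sqrt{2}\,n}\tfrac{\alpha}{\log\alpha}L\log\tfrac{n}{L}\bigr)^{2/3}$ and $b^{\star}=(2a^{\star})^{1/2}$. The asymptotic forms follow by plugging in $L=\Theta((\log n)^{k}/(\log\log n)^{k-1})$ for the $k$-level summary tree together with $\log\frac{n}{L}=\Theta(\log n)$, $m=O(n)$ and $\alpha=O(1)$ and simplifying the exponents; halving the exponents of the closed form for $a^{\star}$ reproduces those for $b^{\star}$.

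The step I expect to be the main obstacle is not the calculus but the bookkeeping that precedes it: deciding precisely which of the roughly half-dozen summands in \cref{lem:with_a_and_b} survive as leading order once $G=s$ and $a,b$ polylogarithmic are taken into account — several natural-looking terms (the $s\log s$ contribution of the $K$-array, all the $\log\log$ and $\log\log\log$ terms) turn out to be dominated — and, relatedly, which form of the deliberately loosened bound to differentiate so that the constants $3$ and $\sqrt{2}$ come out exactly rather than merely up to a constant factor. Once $F$ has been pinned down to $\frac{C}{a}+D\sqrt{a}$, the minimization is routine and its outcome is automatically a global optimum over the positive reals.
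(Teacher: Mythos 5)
Your proposal is correct and follows essentially the same route as the paper: restrict to the $G=s$ regime, isolate the $a$- and $b$-dependent summands of Lemma~\ref{lem:with_a_and_b}, obtain $b^\star=\sqrt{2a}$ from $\partial_b((2a/b+b)\log\alpha)=0$, substitute back to get a one-variable function of the form $C/a+D\sqrt{a}$, and solve $a^{3/2}=2C/D$ before plugging in $L=\Theta((\log n)^k/(\log\log n)^{k-1})$. The only caveat is the exact multiplicative constant in $a^\star$, which depends on precisely which near-equal terms ($t\log\frac{n}{L}$ versus $t\log s$, etc.) one chooses to count toward $C$ --- a point the paper itself is loose about (its proof even derives $\frac{2m}{\sqrt{3}n}$ while the statement says $\frac{3m}{\sqrt{2}n}$) and which is immaterial for the asymptotic claim.
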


\begin{proof}
	To find the \(a\) that minimizes \(\nameDS_\textnormal{space}\), we consider the part of \(\nameDS_\textnormal{space}\) that depends \(a\), i.e., \(f(a)=t\log\frac{n}{L}+t\log s+2sb\log\alpha+t\log s+s\log s=3\log\frac{n}{aL}+\sqrt{8}\frac{\log\alpha}{\alpha}\frac{n}{mL}\sqrt{a}\).
  To minimize it, we solve \(f^\prime(a)=-3\log\frac{n}{a^2L}+\sqrt{8}\frac{\log\alpha}{\alpha}\frac{n}{mL2\sqrt{a}}\overset{!}{=}0\).
  This results in \(a^\star=(\frac{2m}{\sqrt{3}n}\frac{\alpha}{\log\alpha}L\log\frac{n}{L})^{2/3}\in O((L\log\frac{n}{L})^{2/3})\subseteq O(L\frac{\log\frac{n}{L}}{L^{1/3}})\subseteq O\!\left(L\frac{(\log\log n)^{(k-1)/3}}{(\log n)^{(k-2)/3}}\right)\)
  
   To minimize the part of \(\nameDS_\textnormal{space}\) that depends on \(b\), i.e., \(f(b)=(2\frac{a}{b}+b)\log\alpha\), we solve \(f^\prime(b)=\frac{(b^2-2a)\log\alpha}{b^2}\overset{!}{=}0\), which has a solution for \(b^\star=\sqrt{2a}\)
\end{proof}

Now, we can show the minimized space-requirements of \nameDS\ using the optimal values \(a^\star\) and \(b^\star\).
To this end, we use \(\frac{a^\star}{b^\star}\in O(\sqrt{a^\star})\), and \(L=O(\frac{(\log n)^{k}}{(\log\log n)^{k-1}})\). 

\begin{theorem}
  \label{lem:final_space}
  Our 3-level sample tree \nameDS\ using a \(k\)-level summary tree requires \(O\!\left(n\frac{(\log\log n)^{(2k-2)/3}}{(\log n)^{(2k-1)/3}}\right)\) bits of space and can be constructed in the same time on top of the construction time needed for a rank data structure.
\end{theorem}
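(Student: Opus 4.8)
The plan is to feed the optimal sample rates $a^\star$ and $b^\star$ obtained above into the space bound of \cref{lem:with_a_and_b}, working in the regime $G=s$ that is the space-optimal one for $k\ge 3$. It then suffices to check that each of the four bracketed terms of that bound is $O\!\left(n\frac{(\log\log n)^{(2k-2)/3}}{(\log n)^{(2k-1)/3}}\right)$, since adding four such estimates proves the theorem. Throughout I would use $\alpha\in O(1)$ (hence $\log\alpha,\log\log\alpha,\alpha/\log\alpha\in O(1)$), $s=\Theta(n/L)$ and $t=m/a^\star\le n/a^\star$, $\log\frac nL=\Theta(\log n)$, the bound $a^\star=O\!\left((L\log\frac nL)^{2/3}\right)$ together with $b^\star=\sqrt{2a^\star}$ and the stated consequence $\frac{a^\star}{b^\star}=O(\sqrt{a^\star})$, and finally $L=\Theta\!\left((\log n)^{k}/(\log\log n)^{k-1}\right)$ from the $k$-level summary tree.

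Then I would walk through the four terms. For the top-level term, its parenthesised factor $\log\frac nL+\log G+\log\log\frac ab$ is $O(\log n)$ while $t\le n/a^\star=O\!\left(n/(L\log n)^{2/3}\right)$, so the term is $O\!\left(n(\log n)^{1/3}/L^{2/3}\right)$; plugging in $L^{2/3}=\Theta\!\left((\log n)^{2k/3}/(\log\log n)^{(2k-2)/3}\right)$ turns this into exactly the target bound. For the first mid-level term, $G=s$ makes $\log\frac sG=0$, leaving $s\log s=\Theta(n\log n/L)=\Theta\!\left(n(\log\log n)^{k-1}/(\log n)^{k-1}\right)$; the quotient of this with the target bound is $\Theta\!\left((\log\log n)^{(k-1)/3}/(\log n)^{(k-2)/3}\right)=o(1)$ for $k\ge 3$, so this term is dominated. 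The second mid-level term, again with $G=s$ and $\alpha\in O(1)$, collapses to $O\!\left(s\tfrac{a^\star}{b^\star}+s\log s\right)=O\!\left(s\sqrt{a^\star}+s\log s\right)$; the $s\log s$ summand was just handled, and $s\sqrt{a^\star}=O\!\left(n(L\log n)^{1/3}/L\right)=O\!\left(n(\log n)^{1/3}/L^{2/3}\right)$ is once more the target bound. Finally the bot-level term $sb^\star\log\alpha=O(sb^\star)=O(s\sqrt{a^\star})$ is bounded in the same way, so the total is $O\!\left(n\frac{(\log\log n)^{(2k-2)/3}}{(\log n)^{(2k-1)/3}}\right)$.

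For the construction time I would reuse the argument of \cref{thm:spaceUncompressed}: build the three levels of \nameDS\ top-down and left-to-right, making one pass over the $n/L$ superblocks per level (so $O(1)$ passes) without inspecting the bit patterns they summarise. This costs $O(n/L)$ plus time proportional to the number of generated entries, and the latter is dominated by the bit-size estimate above; since $n/L$ itself is $o$ of the target space bound, the total extra construction time is $O\!\left(n\frac{(\log\log n)^{(2k-2)/3}}{(\log n)^{(2k-1)/3}}\right)$ as well.

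The main obstacle, I expect, is the bookkeeping in the middle paragraph: one must verify that the superficially largest term, the mid-level $s\log s$, is genuinely of lower order, which reduces to the elementary inequality $k-1\ge(2k-1)/3$ (true for $k\ge 2$, strict for $k\ge 3$), and one must be careful that the various $\log\log(\cdot)$ factors and the $\lceil\cdot\rceil$ roundings inherent in bit-compression stay in the lower-order bin -- the latter already being discharged inside \cref{lem:with_a_and_b}. A secondary point is confirming that the coarse estimates $\frac{a^\star}{b^\star}=O(\sqrt{a^\star})$ and $b^\star=\Theta(\sqrt{a^\star})$ are tight enough to keep both the $s\,a/b$ and the $sb$ contributions at the target magnitude, so that no finer handling of $a^\star$ and $b^\star$ is needed.
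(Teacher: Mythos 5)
Your proposal is correct and follows essentially the same route as the paper's proof: substitute \(a^\star\) and \(b^\star\) into \cref{lem:with_a_and_b} with \(G=s\), reduce every addend to either \(O\bigl(n(\log \frac{n}{L})^{1/3}/L^{2/3}\bigr)\) or \(O(s\log s)\), substitute \(L=\Theta((\log n)^{k}/(\log\log n)^{k-1})\), and bound construction time by the number of generated entries as in \cref{thm:spaceUncompressed}. The only differences are cosmetic -- you group the six addends into the lemma's four brackets and make explicit the \(k\ge 3\) condition under which \(s\log s\) is dominated, which the paper leaves implicit in its restriction to the \(G=s\) regime.
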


\newcommand{\proofVSPACE}{\ensuremath{\vphantom{\frac{n}{L}\frac{a^\star}{b^\star}}}}

\begin{proof}
  We bound the \emph{space} of \nameDS\ by considering all addends (\cref{lem:with_a_and_b}) with values \(a^\star\) and \(b^\star\): \(\nameDS_\textnormal{space}\in O(\underbrace{t\log\frac{n}{L}}_{\cref{eq:first_addend}}+\underbrace{t\log s\proofVSPACE}_{\cref{eq:second_addend}}+\underbrace{t\log\log\frac{a^\star}{b^\star}}_{\cref{eq:third_addend}}+\underbrace{s\frac{a^\star}{b^\star}}_{\cref{eq:fourth_addend}}+\underbrace{s\log s\proofVSPACE}_{\cref{eq:fifth_addend}}+\underbrace{sb^\star\proofVSPACE}_{\cref{eq:sixth_addend}})\).
  \begin{align}
    t\log\frac{n}{L}&\in O\!\left(\frac{n}{a^\star}\log\frac{n}{L}\right)\subseteq O\!\left(n\frac{(\log\frac{n}{L})^{1/3}}{L^{2/3}}\right)\label{eq:first_addend}\\
    &\subseteq O\!\left(n\frac{(\log\log n)^{(2k-2)/3}}{(\log n)^{(2k-1)/3}}\right)\nonumber
  \end{align}\\[-1cm]
  \begin{align}
    t\log s &%
              \overset{\cref{eq:first_addend}}{\in} O\!\left(n\frac{(\log\log n)^{(2k-2)/3}}{(\log n)^{(2k-1)/3}}\right)\label{eq:second_addend}
  \end{align}\\[-1cm]
  \begin{align}
    t\log\log\frac{a^\star}{b^\star} &\in O\!\left(t\log\log\sqrt{a^\star}\right) \subseteq O\!\left(t\log\frac{n}{L}\right)\label{eq:third_addend}\\ &\overset{\makebox[0pt]{\scriptsize\cref{eq:first_addend}}}{\subseteq} O\!\left(n\frac{(\log\log n)^{(2k-2)/3}}{(\log n)^{(2k-1)/3}}\right)\nonumber
  \end{align}\\[-1cm]
  \begin{align}
    s\frac{a^\star}{b^\star} &\in O\!\left(s\sqrt{a^\star}\right)\subseteq O\!\left(\frac{n}{L}\sqrt{L\frac{(\log\log n)^{(k-1)/3}}{(\log n)^{(k-2)/3}}}\right)\label{eq:fourth_addend}\\
    &\subseteq O\!\left(n\frac{(\log\log n)^{(2k-2)/3}}{(\log n)^{(2k-1)/3}}\right)\nonumber
  \end{align}\\[-1cm]
  \begin{align}
    s\log s &\in O\!\left(\frac{n}{L}\log\frac{n}{L}\right)\subseteq O\!\left(n\frac{(\log\log n)^{(k-1)}}{(\log n)^{(k-1)}}\right)\label{eq:fifth_addend}\\
    &\subseteq O\!\left(n\frac{(\log\log n)^{(2k-2)/3}}{(\log n)^{(2k-1)/3}}\right)\nonumber
  \end{align}\\[-1cm]
  \begin{align}
    sb^\star &\in O\!\left(s\sqrt{a^\star}\right)\overset{\cref{eq:fourth_addend}}{\subseteq} O\!\left(n\frac{(\log\log n)^{(2k-2)/3}}{(\log n)^{(2k-1)/3}}\right)\label{eq:sixth_addend}
  \end{align}
  
Analogous to the argument in the uncompressed case (\cref{ss:analysis}), the \emph{construction time} of \nameDS\ is bounded by the maximum number of entries in one of its level.
Therefore, we can compute it in time \(O(n(\frac{(\log\log n)^{(2k-2)/3}}{(\log n)^{(2k-1)/3}}))\).
\end{proof}

\section{Implementation of the \nameDS\ Rank-Select Data Structure}
\label{sec:implementation}

\newcommand*\cpp{C\kern-0.2ex\raisebox{0.4ex}{\scalebox{0.8}{+\kern-0.4ex+}}}

We implemented the \nameDS\ sample tree and a configurable $k$-level summary tree for $k \in \{2,3\}$ in \cpp.
We now discuss some of these implementation details.

\subsection{Implementation of the Sample Tree}
Our implementation of the \nameDS\ sample tree is a bit-precise replica of the theoretical presentation.
It fulfills all asymptotic space bounds and the absolute bounds in terms of number of bits.
It consists of the arrays $T$, $M_{\rho}$, $K_{\hat{\rho}}$, and $B_{\rho'}$, for $\rho, \hat{\rho}, \rho' \in 0..64$.
The contents of $P_{\rho}$ are merged directly with the mid-groups in $M_{\rho}$.

We construct the sample tree for fixed parameters $a, b, \alpha$, and $L$.
For $\alpha$, values $\alpha \in 1..64$ are the most promising.
For $a$ and $b$, we consider \astar\ and \bstar\@.
We also tested different values, see \cref{a_b_discussion}.
In any case, $a$ and $b$ are always rounded up to the nearest power of two.
The superblock size $L$ is provided by the summary tree.

For known $a,b,\alpha$, and $L$, the sample tree is constructed in two passes.
First, all values to be stored are extracted from the bit vector.
For each of the different entry-types $o$, $g$, $\kappa$, $\hat{\rho}, h$, and $C$, the largest occurring value is determined, resulting in the corresponding bit-sizes $\beta_o, \beta_g, \beta_{\kappa}, \beta_{\hat{\rho}}, \beta_{h}$, and $\beta_C $ that are globally fixed for all entry-types.
Then, the tree is packed bit-precise in a second pass. 
E.g., each top-group is stored in $\beta_o + \beta_g + \beta_{\kappa}$ bits and each global counter $ C_\rho$ in $\beta_C $ bits.

A query in our implementation follows exactly the flow depicted in \cref{fig:example_sample_tree}.
Bit-precise readout of all packed values is made possible by keeping track of all globally fixed $\beta$-sizes and query-dependent entry sizes $\rho$, $\hat{\rho}$, and $\rho'$.
Division and modulo by $a$ and $b$ are replaced by much cheaper bit-shifts and logical ands.
The rounded-up logarithms are calculated via the fast \texttt{lzcnt} instruction.

\subsection{Implementation of the Summary Tree}
\label{summary-tree-implementation}

\newcommand{\LLzm}{\mathbb{L}_0}
\newcommand{\LLz}{\(\mathbb{L}_0\)}
\newcommand{\LLo}{\(\mathbb{L}_1\)}

We support three L0-block sizes: $L_0 \in \{512, 1024, 2048\}$.
The configuration $L_0=512$ is attractive because the covered bits fit in one cache line.
Larger $L_0$-values result less space-overhead for the summary trees.
\begin{figure}[t]
	\label{L1-block}
\centering
\begin{tikzpicture}[x=0.025cm, y=1cm, every node/.style={font=\small}]
  \def\arrowYOffset{15pt}
  \def\entrysize{\large}
  \def\bitsize{\scriptsize}

  \tikzset{
    cacheentry/.style={
      draw,
      minimum height=1,
      anchor=south west,
      text height=2.25ex,
      text depth=0.5ex
    },
    bventry/.style={
      draw,
      minimum height=1,
      anchor=south west,
      text height=1.0ex,
      text depth=0.25ex
    }
  }
  \newcommand{\mynewbox}[5]{
      \node[cacheentry, minimum width=#4, anchor=west, fill=#1] (#3) at (#5.east) {\entrysize #2};
  }

  \newcommand{\drawBitSpan}[2]{
    \draw[|-|] ([yshift=\arrowYOffset]#1.west) -- ([yshift=\arrowYOffset]#1.east)
    node[midway, above, align=center] {\bitsize #2};
  }

  \newcommand{\secondLevelArrow}[3]{
    \draw[|-|] ([yshift=2*\arrowYOffset]#1.west) -- ([yshift=2*\arrowYOffset]#2.east)
    node[midway, above, align=center] {\bitsize #3};
  }
  \newcommand{\thirdLevelArrow}[3]{
    \draw[|-|] ([yshift=3*\arrowYOffset]#1.west) -- ([yshift=3*\arrowYOffset]#2.east)
    node[midway, above, align=center] {\bitsize #3};
  }

  \newcommand{\mybrace}[4]{
	\draw [ thick, decoration={ brace, raise=#4}, decorate] (#1.east) -- (#2.west) 
	node [pos=0.5,anchor=north,yshift={-#4 - 0.05cm}, font=\bitsize,align=center] {#3}; 
  }

  \def\Lonecolor{my-blue} 
  \def\Ldeltacolor{my-green}
  \def\LCcolor{my-orange}
  \def\Scolor{my-yellow}
  \def\Dcolor{yellow!30}
  \def\Xcolor{gray!20}
  \def\minbitcolor{my-red}
  \def\Lzerowidth{23}

  \node[cacheentry, minimum width=35, fill={\Lonecolor}] (L1) at (0,0) {\entrysize L1};
  \mynewbox{\Scolor}{$S$}{S}{20}{L1}
  \mynewbox{\Dcolor}{$D$}{D}{20}{S}
  \mynewbox{\Ldeltacolor}{L0$^{\Delta}$}{L0a}{\Lzerowidth}{D}
  \mynewbox{\Ldeltacolor}{$...$}{L0b}{\Lzerowidth}{L0a}
  \mynewbox{\Ldeltacolor}{L0$^{\Delta}$}{L0c}{\Lzerowidth}{L0b}
  \mynewbox{\LCcolor}{L0$^C$}{L0d}{\Lzerowidth}{L0c}
  \mynewbox{\LCcolor}{$...$}{L0e}{\Lzerowidth}{L0d}
  \mynewbox{\LCcolor}{L0$^C$}{L0f}{\Lzerowidth}{L0e}

  \drawBitSpan{L1}{43}
  \drawBitSpan{S}{37}
  \drawBitSpan{D}{32}

  \drawBitSpan{L0a}{12}
  \drawBitSpan{L0c}{12}
  \drawBitSpan{L0d}{16}
  \drawBitSpan{L0f}{16}

  \secondLevelArrow{L1}{D}{112 bits}
  \secondLevelArrow{L0a}{L0c}{$24 \cdot 12$ bits }
  \secondLevelArrow{L0d}{L0f}{$7 \cdot 16$ bits }

  \def\bvcolor{gray!40}
  \def\bvsize{\tiny}
  \def\bvwidth{8}
  \newcommand{\bvbox}[4]{
      \node[bventry, minimum width=#3, anchor=west, fill=\bvcolor, font=\bvsize] (#2) at (#4.east) {#1};
  }
  \node[bventry, minimum width=30, fill=\bvcolor, anchor=north west, font=\bvsize] (BV1) at (-2,-1.5) { $101001...$};
  \bvbox{$010111...$}{BV2}{\bvwidth}{BV1}
  \bvbox{$110100...$}{BV3}{\bvwidth}{BV2}
  \bvbox{$111000...$}{BV4}{10}{BV3}
  \node[anchor=north west, font=\Large] (dots) at (BV4.north east) {$\cdots$};
  \bvbox{$001000...$}{BV5}{10}{dots}
  \bvbox{$101111...$}{BV6}{10}{BV5}

  \fill[gray!20] (L1.south west) -- (L1.south east) -- (BV6.north east) -- (BV1.north west) -- cycle;

  \draw (L0a.south) -- (BV1.north);
  \draw (L0b.south west) -- (BV2.north);
  \draw (L0b.south) -- (BV3.north);
  \draw (L0d.south) -- (BV4.north east);

  \def\belowfactor{0.8}
  \newcommand{\drawBelowSpan}[2]{
    \draw[|-|] ([yshift=-\belowfactor*\arrowYOffset]#1.west) -- ([yshift=-\belowfactor*\arrowYOffset]#1.east)
    node[midway, below, align=center] {\bitsize #2};
  }
  \newcommand{\drawBelowSpanSecond}[3]{
    \draw[|-|] ([yshift=-2*\belowfactor * \arrowYOffset]#1.west) -- ([yshift=-2*\belowfactor * \arrowYOffset]#2.east)
    node[midway, below, align=center] {\bitsize #3};
  }
  \node[below=1pt, font=\scriptsize, align=left] at (L1.south) {covers \\ 65536 bits};
  \drawBelowSpan{BV1}{2048}
  \drawBelowSpan{BV2}{2048}
  \thirdLevelArrow{L1}{L0f}{512 bits}
  \drawBelowSpanSecond{BV1}{BV6}{$2^{16}=32 \cdot 2048=65536$ bits}

\end{tikzpicture}

\caption{Structure of a summary tree L1-Block for \(L_0=2048\) and \(L_1=2^{16}\).
  Different parts of the L1-block are colored.
The underlying bit vector is shown in gray.\nr{If we want the Bitvector-Compression shown in the plots, we can also include here the $S$ and $D$ pointer, and say 2-3 sentences on the whole compression business}}

\end{figure}
In each L1-block, we aggregate 32 L0-blocks with aggregation factor $d_1=32$ and L1-block size $L_1=32\cdot L_0 \in \{2^{14},2^{15},2^{16}\}$.
The three \nr{four} parts of our L1-block are shown in \cref{L1-block}.

The value L1 (not to be confused with the size of an L1-block \(L_1\)) stores the number of \texttt{1}-bits occurring before the L1-block.
L1 $\in 0..n$ is monotonically increasing.

Seven values L0$^C$ store the cumulative count of \texttt{1}-bits from the start of the L1-block to the end of their respective L0-block.
They are stored for every 4-th L0-block.
For the 32-th L0-block no such value has to be stored, as it is already encoded in the next L1-value.

24 values L0$^\Delta$ store the individual count of \texttt{1}-bits within each of the remaining 24 L0-blocks.
This concludes the contents of an L1-block.

Preliminary experiments have shown that compression of the underlying bit vector can improve query performance.
To this end, we have implemented a basic compression technique where sufficiently sparse L0-blocks ($\leq 10\%$ density) are stored via delta encoding.
To find these sparse and dense encodings per L1-block we store two pointers $S$ and $D$.
Due to space constraints and the preliminary nature of the  implementation of the compression, we leave a more comprehensive study for future work.
However, we hint at its effectiveness in our experimental evaluation in \cref{sec:experimental_evaluation}.

We use the same L1-block layout for all three choices of $L_0$.
 At the largest configuration, $L_0=2048$, the L1-block covers $L_1=65536$ bits.
The largest L0$^C$ value stores the number of \texttt{1}-bits up to the end of the 28-th L0-block, which can then be at most 57344, which can be encoded in 16 bits.
Similarly, each L0-block contains between 0 and 2048 \texttt{1}-bits, i.e., 2049 distinct values.
This can be represented in 12 bits by each L0$^\Delta$ value. All L0 values together consume thus 400 bits. 112 bits remain to fit into exactly one cache line (512 bits), they are distributed between L1, $D$ and $S$.

We save 96 bits by interleaving individual L0$^\Delta$ counts with cumulative L0$^C$ counts.
This comes at the cost of reconstructing the full cumulative information at run time.
Fortunately, this can efficiently be done with SIMD instructions.
Briefly, the L0$^C$ and L0$^\Delta$ values can be read as four \texttt{\_\_m128i} vectors and then added together three times.
This way,  all 32 cumulative counts can be reconstructed in only 9--10 CPU cycles.
This makes use of the AVX2 instruction set, which is available on nearly all commercially available x86-based CPUs since 2013.
Not using such vector instructions would neglect key capabilities of the current standard CPU architectures. 

The complete 2-level summary tree is obtained by storing the L1-blocks consecutively in a single array.
We can represent two summary tree levels in a single array, because we have carved out enough space around the L0 values to store the L1-value directly next to them.

The L2-blocks in our 3-level summary-tree aggregates 16 L1-blocks, i.e., $L_2 = 16\cdot L_1$.
The L1-entries in the L2-block store the number of \texttt{1}-bits from its start up to the end of their respective L1-block.
Each L1-value is stored in 32 bits for fast SIMD readout, up to the first, which is stored in 17 bits, next to 47 bits for the L2-value. In total, each L2-block requires 512 bits.

The overhead of our summary tree implementation is $512/L_1$ for $k=2$.
This results in 3.1\,\%, 1.56\,\%, and 0.78\,\% overhead for $L_0 = 512$, $L_0=1024$, and $L_0=2048$, resp.
The value $L_0$ thus offers a space/time trade-off. For $k=3$ the L2-blocks cause additional overhead of $512/(16\cdot L_1)$, increasing the overheads by factor of 17/16. 

We now describe a select$_1(i)$ query in the 2-level summary tree ($k=3$ works analogously, first looking at L2-blocks).
Starting with the $(\flOffset + \slOffset + \tlOffset)$-th L1-block, we scan trough the L1-values until $i<\textnormal{L1}$ marks the correct L1-block.
We scan at most $\alpha\in O(1)$ L1-values.
In the correct L1-block, we use the aforementioned SIMD instructions to convert the L0$^C$- and L0$^\Delta$-values into 32 cumulative L0-values.
We compare them to $i -\textnormal{L1}$ via \texttt{\_mm256\_cmpgt\_epi16}.
This results in a vector with \texttt{1}-bits set for each L0-block occurring before the correct L0-block.
We extract the \texttt{1}-bits via \texttt{\_mm256\_movemask\_epi8} and count them with \texttt{popcnt}.
This gives us the the correct L0-block index.
In this L0-block, we scan $\leq L_0$ bits (using popcount) until we find the 64-bit word containing the sought bit.
We obtain its index in the final word by using \texttt{\_pdep\_u64} and \texttt{\_tzcnt}~\cite{PandeyBJ2017FastSelect}.

\begin{table}[t]
	\caption{Practical implementations of rank-select data structures.
    In addition to the asymptotic query times, we give the number of sample tree (\(\ell\)) and summary tree (\(k\)) levels.
  Not all data structures support all queries.}
\label{tab:overview_data_structures}
\centering
\begin{tabular}{lllll}
  \toprule
  & \multicolumn{2}{c}{\textbf{query time \((O(\cdot))\)}}
  & \multicolumn{2}{c}{\textbf{levels}} \\
  \cmidrule{2-5}
\textbf{name} & select$_1$ & rank & \(\ell\)  & \(k\) \\
\midrule
sdsl-1/5~\cite{GogBMP2014SDSL} & \footnotesize{---} &  $1$  & \footnotesize{---} & \footnotesize{---} \\
\midrule
sdsl-mcl~\cite{GogBMP2014SDSL} & $1$ & \footnotesize{---}  & 2 & 2 \\
simple-\(\{0..3\}\)~\cite{Vigna2008BroadwordRankSelect} & $\log\log n$ & \footnotesize{---} &  2 & 1\\
sdsl-sd~\cite{GogBMP2014SDSL} & $\log n$ & \footnotesize{---} & 1 & 2 \\
\midrule
rank9select~\cite{Vigna2008BroadwordRankSelect} & $1$  & $1$  & 2 & 2\\
poppy~\cite{ZhouAK2013PopcountRankSelect}& $n$  &  $1$     & 1 & 2 \\
Pasta Flat~\cite{Kurpicz2022PastaFlat} & $n$ & $1$ & 1 & 2 \\
\textbf{\(\bm \nameDS\)} [here]& $1$ & $1$  & 3 & 2 \& 3 \\
\bottomrule
\end{tabular}
\end{table}

\section{Experimental Evaluation}
\label{sec:experimental_evaluation}
We ran all of our experiments on a machine equipped with an Intel i7 11700 processor (8 cores/16 hardware threads), a base clock speed of 2.5 GHz, and support for AVX-512) and 64 GiB RAM.
The machine ran with Rocky Linux 9.4. The program was written in C++ and compiled with flags \texttt{-DNDEBUG -march=native -O3} with Clang 19.1.0.

We used two kinds of inputs.
First, we tested all data structures on uniformly distributed bits with different densities, i.e., fill-rates between 50\,\% and 1\,\%.
Even sparser bit vectors allow for significant compression, which is outside the scope of this paper. 
The bit vectors have lengths \(1\cdot 10^8\), \(8\cdot 10^8\), and \(64\cdot 10^8\).
We also evaluate a difficult setting for select queries, where we ask for the position of the \texttt{1}-bit right after \(10^d\) \texttt{0}-bits for \(d\in3..8\).
All reported times are the average over 20\,M queries.
All experiments were conducted sequentially.

\subsection*{Implementations}
We compared our implementation (\cref{sec:implementation}) with the following existing rank and/or select data structures, see \cref{tab:overview_data_structures} for more details.\footnote{We also have an implementation with an uncompressed sample tree. We do not show it herer since it is outperformed by \nameDS.}
\begin{description}[noitemsep,topsep=0pt,parsep=0pt,partopsep=0pt]
\item[sdsl-v1/2~\cite{GogBMP2014SDSL}:] Constant time rank only data structures requiring 25\,\% and 6.25\,\% additional space.
\item[sdsl-mcl~\cite{GogBMP2014SDSL}:] Implementation of original \(o(n)\) space select only data structure (25\,\% space-overhead)~\cite{ClarkM1996Select}.
\item[sdsl-sd~\cite{GogBMP2014SDSL}:] Fast compressed bit vector with select support. When no compression is possible up to 80\,\% space-overhead.
\item[simple-0/1/2/3~\cite{Vigna2008BroadwordRankSelect}:] Fastest select only data structure using broadword programming. Offers a space-time trade-off with a space-overhead between 1.75\,\% and 11\,\%. Has no rank support.
\item[rank9select~\cite{Vigna2008BroadwordRankSelect}] Constant time rank and select data structure with 50\,\% space-overhead.
\item[poppy~\cite{ZhouAK2013PopcountRankSelect}:] A rank and select (non-constant time) data structure that requires \(3.5\,\%\) additional space. Unfortunately, we were only able to run it successfully on bit vectors smaller than the ones we tested here.
\item[Pasta Flat~\cite{Kurpicz2022PastaFlat}:] The currently fastest and most space-efficient rank and select (non-constant time) data structure (\(3.5\,\%\) space-overhead). Uses SIMD.
\end{description}

\begin{figure*}[t]
	\centering
	\includegraphics[scale=.8]{\detokenize{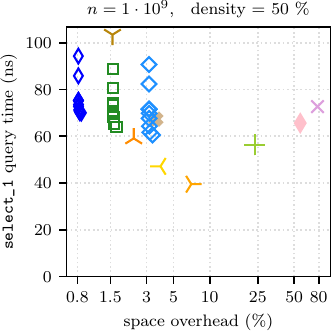}}	\includegraphics[scale=.8]{\detokenize{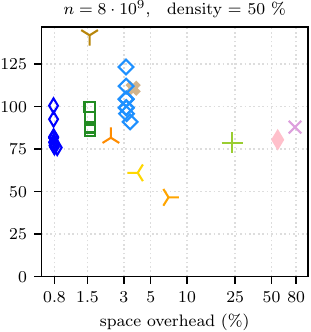}}	\includegraphics[scale=.8]{\detokenize{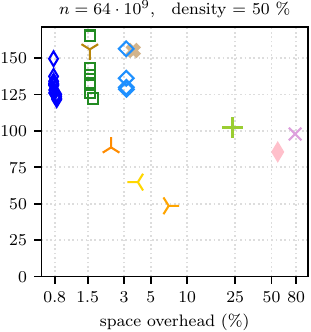}} \\

	\hspace{30pt}\includegraphics[]{\detokenize{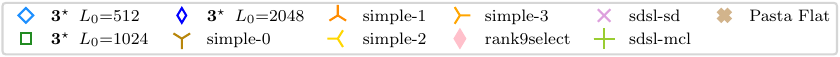}}

	\label{fig:pareto-plot}
	\caption{Pareto plots for the independent-bits instances and density 50\,\%. For other densities, see \cref{fig:other_pareto_plots}.}
\end{figure*}

\subsection{Alternative Sample Tree Parameters}
\label{a_b_discussion}
The values \astar\ and \bstar\ minimize a bound on the sample tree size.
However, optimizing for size alone might not always be the best practical choice, 
in particular as accesses into the mid- or even bot-level can cause cache-misses.
We can optimize for speed by considering smaller values for $a$ that sample more dense than $\alpha L$, e.g., $\afast= \frac{\alpha Lm}{3n}$.
Random queries will then rarely have to look into the mid- and bot-level, allowing faster queries---at the expense of space.
In the worst case, the top-level will dominate in space, with size $ \leq \frac{3n}{\alpha L} \log\frac{n}{L} \in O(n \frac{\log \log n}{\log n})$.
For a given instance, we can also test multiple values $(a,b)$ to find the specific $(a_{\textnormal{min}},b_{\textnormal{min}})$ that provide the smallest tree.

\subsection{Determine Parameters for \nameDS}
Our implementation of \nameDS\ has parameters \(L,a,b\) and \(\alpha\) for which we can find optimal, e.g., \(a^\star\) and \(b^\star\) values, but also experimentally determine values.
In \cref{fig:pareto-plot}, we give a space-time Pareto plot \ps{added explanation of Pareto}(i.e. showing configurations that are not dominated within their family) for different configurations of \nameDS\ for density 50\,\%.
For other densities, which give very similar results, see \cref{fig:other_pareto_plots}.

We can see an interesting dependency on \(L_0\).
On small inputs, we achieve the best query performance for \(L_0=512\) as this saves
work for population count instructions on the lowest level.
Surprisingly, for larger inputs \(L_0=2048\) is better.
Likely, the smaller sample tree reduces cache faults while traversing the sample tree, offsetting overhead for fetching larger L0-blocks which may be fast due to hardware prefetching.

Considering our competitors, only simple-0/1/2/3 has a tuning parameter.
It gives us a clearly visible space-time trade-off.
On the smallest input, simple-1 dominates \nameDS (albeit without support for rank).
On the longer inputs, \nameDS\ and simple-1/2/3 are both on the Pareto front \ps{added explanation},i.e., dominated by other data structures.
All other competitors are also not on the Pareto front, mostly because they require significantly more memory without a significant improvement in performance compared to the more space-efficient implementations.
Note that on specific instances, these implementation perform well, see \cref{sec:experiments_query_time}.

\paragraph{Selected Parameters}
From the Pareto plots, we chose four configurations for \nameDS: First, ($L_0=512, a_\textnormal{fast}, \alpha=2, k=2$) performs especially well on small instances.
Second, ($L_0=2048, a^*, \alpha=16, k=2$) is robust on all inputs. 
Third, ($L_0=2048, a_\textnormal{fast}, \alpha=8, k=3$) performs especially well on large instance.
And fourth, ($L_0=2048, a_\textnormal{min}, \alpha=32, k=2$) has a very small overhead, e.g., for independent bits at 50\,\% density its sample tree requires 94\,kB for a 8\,GB instance.
With a $k=3$ summary tree, its sample tree shrinks down to 6\,kB, but the query time also deteriorates significantly.

\begin{figure*}[t]
	\centering
	\includegraphics[scale=.8]{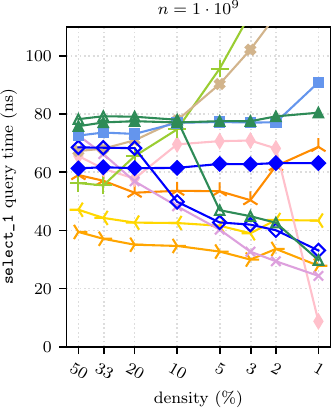}
	\includegraphics[scale=.8]{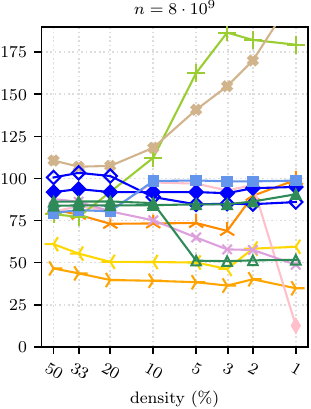}
	\includegraphics[scale=.8]{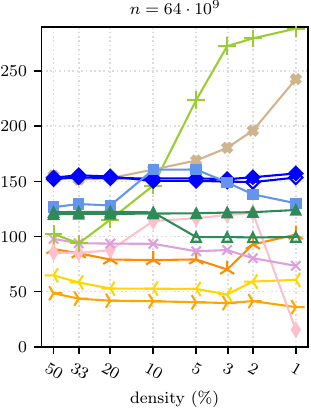}

	\vspace{5pt}
	\hspace{30pt}\includegraphics[]{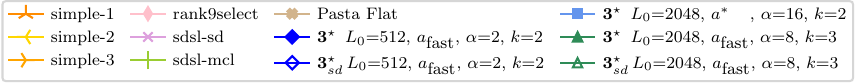}
	\caption{Select query times for different densities.
          Three Pareto-optimal configurations are shown for \nameDS.
        }
        \label{fig:indep-bits-curve}
      \end{figure*}

\subsection{Select Query Times}
\label{sec:experiments_query_time}
In \cref{fig:indep-bits-curve}, we show the query time of our selected configurations and our competitors for different input sizes and densities.

It is apparent that simple-2/3 outperform \nameDS\ on all inputs.
That is, unless we enable compression, see~\cref{sec:implementation} for more details on our prototypical compression scheme.
Our compressed variants are marked by the term \emph{sd} and help us outperform simple-2 on densities at most 5\,\%.
Note that even in the uncompressed case simple-2/3 require \(4.5\times\) and \(9.6\times\) more space and do not support rank queries.
All non-compressed configurations behave as expected, having a query time independent of the density.

Pasta Flat and (surprisingly) sdsl-mcl require more time as the density decreases.
For Pasta flat, this can be explained by the fixed sample rate (every 8192 \texttt{1}-bit is sampled) that require the data structure to scan through a large range of superblocks.
We are not sure why also sdsl-mcl behaves this way.
Also an outlier is rank9select, which becomes very fast for density 1\,\%, as it can sample most of the positions here directly.
However, this comes at the cost of a 50\,\% space-overhead.

\begin{figure}[t]
	\centering
	\includegraphics[]{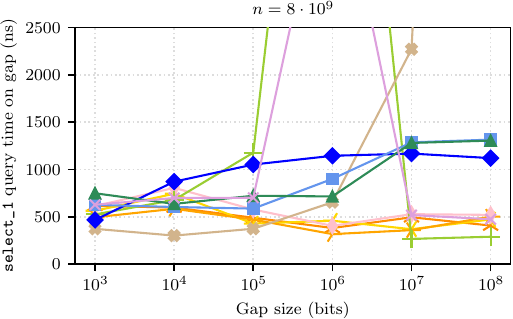}

	\vspace{5pt}
	\hspace{30pt}\includegraphics[]{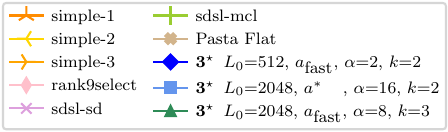}
	\label{fig:gap}
	\caption{Difficult select instances with a gap consisting of only \texttt{0}-bits in the middle. Query times for the first \texttt{1}-bit after the gap.}
\end{figure}

\subsection{Difficult Select Queries}
In \cref{fig:gap}, we report the query times for our difficult gap-instance.
Here, we see that the space-inefficient and select-only implementations can cope with large gap sizes.
Space-efficient rank \emph{and} select data structures other than \nameDS\ have outliers for some gap sizes, as their sample strategies cannot cope with this kind of input.

\subsection{Rank Queries}
As our main contribution are for select data structures and due to space-constraints, the results for rank queries can be found in \cref{sec:additional_experimental_results}.

\section{Conclusion and Outlook}
\label{sec:conclusion_and_outlook}
We have brought theory and practice of rank and select data structures much closer together. This can impact a wide variety of applications. We want to explore more trade-offs between space and performance. In particular, we believe that faster queries with only slightly more space overhead is possible. The possibility to  compress the bit vector itself opens up a further interesting dimension in the design space. In particular, it is intriguing that large gaps that can be responsible for expensive select queries can be highly compressed.

\bibliography{references}

\clearpage

\appendix

\section{Symbols and Notation}
~\\[-.5cm]
\begin{table}[h!]
  \caption{Notation used in this paper.}
  \label{tab:notation}
  \centering
  \begin{tabular}{ll}
    \toprule
    symbol & meaning \\
    \midrule
    $v$ & a bit vector\\
    $n$ & bit-vector size\\
    $m$ & number of \texttt{1}-bits\\
    $\ell$ & number of sample-tree levels\\
    $k$ & number of summary-tree levels\\
    $L_i$ & level $i\in 0..k-1$ summary-tree block size\\
    $L$ & superblock size $L_{k-1}$\\
    $a_i$ & sample rate at level $i\in 0..\ell-1$\\
    $a$ & $a_0$\\
    $b$ & $a_1$\\
    $\alpha$ & scan threshold\\
    $\flOffset,\slOffset,\tlOffset$ & superblock offset (top, mid, and bot)\\
    $g$ & index of mid-group\\
    $\kappa$ & size of max. counter in mid-group\\
    $\rho$ & size of $\tlOffset$ for entry\\
    $c$ & \# gaps w/ same size \(\rho\) in mid-group \\
    \bottomrule
  \end{tabular}
\end{table}

\section{Bit-Compressed 2-Level Sample Tree}\label{s:compressedTwo}
We outline and analyze an approach \nameDStwo\ that is a simplification of \nameDS.
We use an uncompressed top-level sample of every $a$-th \texttt{1}-bit.
More precisely, an entry comprises a $\log\frac{n}{L}$-bit offset $o$
and a $\log s$-bit group number $g$.
For up to $s=\frac{n}{\Nmax L}$ large gaps, we store dense samples
that need $a\log r$ bits for gaps of size $r>\Nmax$.

Thus, the overall space is
\begin{align*}
  S &=\frac{m}{a}\left(\log\frac{n}{L}+\log\frac{n}{\Nmax L}\right)+\sum_{i=1}^sa\log r_i\\
  &\leq 2\frac{m}{a}\log\frac{n}{L}+sa\log\alpha
  = 2\frac{m}{a}\log\frac{n}{L}+\frac{an}{\Nmax L}\log\alpha.
\end{align*}
This expression is minimized for $a=\sqrt{\frac{2m\log(n/L)}{n\log\alpha}}$.
Substituting this back into $S$ yields
$$S=\sqrt{\frac{8mn\log(n/L)\log\alpha}{\alpha L}}=O\left(n\sqrt{\frac{\log n}{L}}\right)$$
For $L=\Theta(\log^k(n)/\log\log^{k-1}n)$ we get space
$$S=O\left(n\frac{\log\log^{(k-1)/2}n}{\log^{(k-1)/2}n}\right).$$

\section{Space of Bit-Compressed 3-Level Sample Tree with a 2-Level Summary Tree}
\label{sec:special_case_space}
We now show that \(\nameDS_\textnormal{space}\in O(n\frac{\sqrt{\log\log n}\sqrt{\log\log\log n}}{\log n})\) for \(\ell=3\) and \(k=2\).
Most importantly, here, we have \(G=t=\frac{m}{a}\) if we want to minimize the space.

This gives us new values for \(a^\star\) and \(b^\star\), which we can derive with similar arguments as in the main part of the paper.

\begin{lemma}
  \label{lem:optimal_a_and_b}
  To minimize the space, the optimal choices for \(a\) and \(b\) are:

  \[a^\star= \sqrt{\frac{m}{2n} \frac{\alpha}{\log \alpha} L} \cdot \log k \cdot \sqrt{\log \frac{a^*n}{Lm}}\] and \[b^\star = \sqrt{2 \frac{m}{n} \frac{\alpha}{\log \alpha}L \log \left(\frac{an}{Lm}\right)}\]

  \noindent Asymptotically, we have
  \[a^\star\in O(\log\frac{n}{L}(L\log\log\log n)^{1/2})\]
  and
  \[b^\star\in O((L\log\frac{a}{L})^{1/2}).\]
  
\end{lemma}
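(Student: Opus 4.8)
The plan is to repeat the two-stage parameter optimization used for the general-$k$ statement, but to start from the space bound of \cref{lem:with_a_and_b} after substituting $G=t=\frac{m}{a}$, which — as discussed just before the lemma — is the branch that minimizes space when $k=2$. The effect of this substitution is that the costly mid-level array $M_\rho$ contributes $2G\frac{a}{b}\log\frac{\alpha s}{G}=\frac{2m}{b}\log\frac{na}{Lm}$, so $b$ now appears non-trivially only here and in the bot-level term $sb\log\alpha$; every other term depends on $b$ at most through the lower-order $\log\log\frac{a}{b}$.

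First I would optimize over $b$. Its genuinely $b$-dependent part is the convex function
\[
  g(b)=\frac{2m}{b}\log\frac{na}{Lm}+sb\log\alpha .
\]
Solving $g'(b)=0$ and using $s=\frac{n}{\alpha L}$ gives $b^\star=\sqrt{\frac{2m\alpha L}{n\log\alpha}\log\frac{an}{Lm}}=\sqrt{2\frac{m}{n}\frac{\alpha}{\log\alpha}L\log\frac{an}{Lm}}$, i.e. exactly the closed form in the statement; convexity certifies it is the minimizer, and back-substitution gives $g(b^\star)=\Theta\!\left(\sqrt{\frac{mn\log\alpha}{\alpha L}\log\frac{an}{Lm}}\right)$, the only place where $a$ enters the mid/bot cost (and only through a logarithm).

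Next I would optimize over $a$. After plugging in $b^\star$, the $a$-dependent part of $\nameDS_\textnormal{space}$ splits into a decreasing contribution of the form $\frac{m}{a}\,C$, where $C$ collects the logarithmic factors from the top-level entries, the $M_\rho$ counters, and the global-counter arrays $K_{\hat\rho}$, plus the slowly increasing contribution $g(b^\star)$, which depends on $a$ only inside a logarithm. Differentiating the sum and setting it to zero yields, up to constants, an equation of the shape $\frac{Cm}{a}=\sqrt{\frac{m\alpha L}{2n\log\alpha}\cdot\frac{1}{\log(an/(Lm))}}$, whose solution is precisely the self-referential expression for $a^\star$ in the statement (the target $a^\star$ reappears in $\log\frac{a^\star n}{Lm}$ exactly because the increasing term grows only logarithmically in $a$). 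The second derivative is positive, so this stationary point is the minimizer, and one checks that $t\le s$ there, so that $G=t$ was indeed the right branch.

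Finally I would convert to asymptotics. Substituting the $k=2$ superblock size $L=\Theta(\log^{2}n/\log\log n)$, the nested logarithm $\log\frac{a^\star n}{Lm}$ collapses to $\Theta(\log\log\log n)$ once a first rough estimate of $a^\star$ is fed back in, turning the two closed forms into $a^\star\in O\!\left(\log\frac{n}{L}(L\log\log\log n)^{1/2}\right)$ and $b^\star\in O\!\left((L\log\frac{a}{L})^{1/2}\right)$. The main obstacle is that both $a^\star$ and $b^\star$ are defined implicitly — each occurs inside its own defining logarithm — so one must justify via a short bootstrapping/fixed-point argument that these equations admit well-defined solutions of the stated order; a secondary subtlety, as in the general-$k$ proof, is bookkeeping which of the several nested $\log$, $\log\log$, and $\log\frac{s}{t}$ factors are genuinely dominant when reading off the $O(\cdot)$ bounds.
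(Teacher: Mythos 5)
Your proposal is correct and follows exactly the route the paper intends: the paper states this lemma without proof, saying only that the values ``can be derived with similar arguments as in the main part'', i.e., by substituting $G=t=m/a$ into \cref{lem:with_a_and_b} and running the same two-stage calculus optimization, which is precisely what you do --- your $b^\star$ computation reproduces the stated closed form exactly, and your treatment of the only structural novelty (that with $G=t$ the mid-level cost becomes $\frac{2m}{b}\log\frac{na}{Lm}$, so $a$ enters the increasing part only logarithmically, forcing the self-referential form of $a^\star$) is the right one. One small slip: your displayed balance equation for $a$ has its right-hand side inverted --- equating derivatives gives $\frac{Cm}{a}\approx\sqrt{\frac{2mn\log\alpha}{\alpha L}\cdot\frac{1}{\log(an/(Lm))}}$ rather than $\sqrt{\frac{m\alpha L}{2n\log\alpha}\cdot\frac{1}{\log(an/(Lm))}}$; solving the corrected equation (with $C=\Theta(\log s)$) does yield the stated $a^\star$, so this reads as a transcription error rather than a flaw in the argument.
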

\nr{$a^*$ on purpose in recursive form, on purpose? I would have put the approximated but explicit $a^*$ here, and show the true recursive dependency only in the derivation}

Similar to the result in the main part, we can now give an upper bound for the space required by \nameDS\ for \(\ell=3\) and \(k=2\).

\begin{theorem}
 Our 3-level sample tree \nameDS\ using a \(2\)-level summary tree requires \[O\left(n\frac{\sqrt{\log\log n}\sqrt{\log\log\log n}}{\log n}\right)\] bits of space.
\end{theorem}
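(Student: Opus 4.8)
The plan is to specialize the general space bound of \cref{lem:with_a_and_b} to the regime $\ell=3$, $k=2$ — where, as noted above, $G=\min\{s,t\}=t=m/a$ is the space‑minimizing branch — substitute the optimal sample rates $a^\star,b^\star$ of \cref{lem:optimal_a_and_b}, and bound every resulting addend, exactly as in the proof of \cref{lem:final_space}. With $k=2$, $L_0=\Theta(\log n)$ and $d_1=\Theta(\log n/\log\log n)$ we have $L=\Theta((\log n)^2/\log\log n)$, so the target bound equals $O(n\sqrt{\log\log\log n}/\sqrt{L})$; this is the form I would verify for each addend.

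First I would unpack the optimal parameters, using the tight $\Theta$‑versions of the estimates in \cref{lem:optimal_a_and_b}: $a^\star=\Theta\!\big(\sqrt{L}\,\log\tfrac nL\,\sqrt{\log\log\log n}\big)$ and $b^\star=\Theta\!\big(\sqrt{L\log\log\log n}\big)$. The key point is $a^\star=\omega(L)$, so for any constant $\alpha$ and large $n$ we get $t=m/a^\star\le n/(\alpha L)=s$, confirming $G=t$, and $s/t=\Theta(a^\star/(\alpha L))=\Theta(\sqrt{\log\log n}\sqrt{\log\log\log n})$; hence $\log\tfrac sG$ and $\log\log\tfrac{\alpha s}{G}$ are $O(\log\log\log n)$. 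Since $a^\star$ is polylogarithmic, $\log\log\tfrac{a^\star}{b^\star}=O(\log\log\log n)$ too, while $\log\tfrac nL,\log s,\log t$ are $O(\log n)$.

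Then I would go through \cref{lem:with_a_and_b} addend by addend, using $G=t$ and $t\tfrac ab=\tfrac mb$. Every $O(t\log n)$ addend — all three top‑level terms, plus $G\log\log\tfrac{\alpha s}{G}$ and $G\log G$ from the mid‑level part — is $O\!\big(\tfrac{n\log n}{a^\star}\big)=O\!\big(\tfrac{n\sqrt{\log\log n}}{\log n\sqrt{\log\log\log n}}\big)$, below the target. The counter‑array term $G\big(\log\tfrac sG+1\big)\log s$ is $O(t\log\log\log n\cdot\log n)=O\!\big(\tfrac{n\log\log\log n\cdot\log n}{a^\star}\big)=O\!\big(\tfrac{n\sqrt{\log\log n}\sqrt{\log\log\log n}}{\log n}\big)$, i.e. the target; the mid‑entry term $2G\tfrac ab\log\tfrac{\alpha s}{G}=\tfrac{2m}{b^\star}\log\tfrac{\alpha s}{t}=O\!\big(\tfrac{n\log\log\log n}{\sqrt{L\log\log\log n}}\big)=O\!\big(\tfrac{n\sqrt{\log\log\log n}}{\sqrt L}\big)$, again the target; and likewise the bot‑level term $sb^\star\log\alpha=O\!\big(\tfrac nL\sqrt{L\log\log\log n}\big)=O\!\big(\tfrac{n\sqrt{\log\log\log n}}{\sqrt L}\big)$. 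Adding the constantly many addends keeps us in $O(n\sqrt{\log\log\log n}/\sqrt L)$, the claimed bound. (Construction time is bounded by the largest number of entries on a level, as in \cref{lem:final_space}, hence of the same order.)

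The one genuinely delicate step is the implicitly defined $a^\star$ in \cref{lem:optimal_a_and_b}: one must verify that it is well defined, meets the stated $\Theta$‑asymptotics — in particular the self‑consistent estimate $\log\tfrac{a^\star}{L}=\Theta(\log\log\log n)$ — and, crucially, that $a^\star=\omega(L)$, since that is exactly what puts us in the $G=t$ branch and makes $s/t$ only poly‑triple‑logarithmic (precisely the feature that improves the exponent from $2/3$ to $1/2$ relative to \cref{lem:final_space}). Handling $m\ll n$ is routine: decreasing $m$ shrinks $a^\star,b^\star,t$ while $s$ is essentially unchanged, so by monotonicity of $x\mapsto x\log(s/x)$ on $[0,s/e]$ the worst case is $m=\Theta(n)$, which is what is estimated above.
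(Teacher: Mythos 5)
Your proposal is correct and follows essentially the same route as the paper's own proof: specialize \cref{lem:with_a_and_b} to the $G=t$ branch, substitute the $a^\star,b^\star$ of \cref{lem:optimal_a_and_b}, and verify addend by addend that everything is $O\bigl(n\sqrt{\log\log\log n}/\sqrt{L}\bigr)$ with $L=\Theta((\log n)^2/\log\log n)$. Your additional checks (that $a^\star=\omega(L)$ forces $G=t$, that $s/t=\Theta(\sqrt{\log\log n\cdot\log\log\log n})$ so the mid-level logarithms are only triple-logarithmic, and the monotonicity argument for $m\ll n$) are points the paper asserts without proof, so they are welcome but do not change the structure of the argument.
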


\begin{proof}
  We now combine \cref{lem:with_a_and_b} with \cref{lem:optimal_a_and_b} to show the worst-case space-requirements of our data structure.
  To this end, we show that all addends meet the upper bound.

  In addition, we use the following upper bounds:
  \(t=m/a=O(n/a)\), \(L=O(\log^2 n/\log\log n)\), \(\frac{a^\star}{b^\star}=O(\log\frac{n}{L})\), \(b^\star=O(\frac{\sqrt{\log\log n}\sqrt{\log\log\log n}}{\log n})\), \(\frac{k}{t}=O(\frac{a}{L})=O(\sqrt{\log\log n}\sqrt{\log\log\log n})\), \(k=O(\frac{n}{\alpha L})=O(\frac{n}{L})\)

\begin{description}
\item[\(t(\log\frac{n}{L}+\log t+\log\log\frac{a}{b})\):] Here, we individually consider the three addends:
  \begin{enumerate}
  \item \(t\log\frac{n}{L}=O(\frac{n}{a}\log\frac{n}{L})=O(\frac{n}{\log\frac{n}{L}(L\log\log\log n)^{1/2}}\log\frac{n}{L})=O(n\frac{\sqrt{\log\log n}}{\log n\sqrt{\log\log\log n}})\)
  \item \(t\log t=O(\frac{n}{a}\log\frac{n}{a})=O(\frac{n}{b^\star}\frac{\log\frac{n}{b^\star\log\frac{n}{L}}}{\log\frac{n}{L}})=O(n\frac{\sqrt{\log\log n}}{\log n}\frac{\log\frac{n}{b^\star\log\frac{n}{L}}}{\log\frac{n}{L}})=O(n\frac{\sqrt{\log\log n}}{\log n})\)
  \item \(t\log\log\log\frac{a^\star}{b^\star}=O(\frac{n}{b^\star\log\frac{n}{L}}\log\log\log\frac{n}{L})=O(\frac{n}{b^\star})=O(n\frac{\sqrt{\log\log n}}{\log n})\)
  \end{enumerate}
\item[\(t \left( 2\frac{a}{b}\log \frac{\alpha k}{t}+ \log \log \frac{\alpha k}{t} + \log t \right)\):] Again, we consider the three addends individually:
  \begin{enumerate}
  \item \(t2\frac{a^\star}{b^\star}\log\frac{\alpha k}{t}=O(\frac{n}{a^\star}\frac{a^\star}{b^\star}\log\frac{k}{t})=O(\frac{n}{b^\star}\log\frac{a}{L})=O(n\frac{\sqrt{\log\log n}}{\log n\sqrt{\log\log\log n}}\log\log\log n)=O(n\frac{\sqrt{\log\log n}\sqrt{\log\log\log n}}{\log n})\)
  \item \(t\log\log\frac{\alpha k}{t}=O(\frac{n}{a^\star}\log\log\log\log n)=O(n\frac{\sqrt{\log\log\log n}}{b^\star\log\frac{n}{L}})=O(n\frac{\sqrt{\log\log n}\sqrt{\log\log\log n}}{\log n\log\frac{n}{L}})\)
  \item We already did \(t\log t\).
  \end{enumerate}
\item[\(t \left( \log \frac{k}{t} + 1 \right) \log k\):] Again, we consider each addend individually:
  \begin{enumerate}
  \item \(t\log\frac{k}{t}\log k=O(\frac{n}{b^\star\log\frac{n}{L}}\log\frac{a}{L}\log\frac{n}{L})=O(n\frac{\log\frac{a}{L}}{b^\star})=O(n\frac{\sqrt{\log\log n}\sqrt{\log\log\log n}}{\log n})\)
  \item \(t\log k=O(\frac{n}{b^\star\log\frac{n}{L}}\log\frac{n}{L})=O(\frac{n}{b^\star})=O(n\frac{\sqrt{\log\log n}}{\log n})\)
  \end{enumerate}
  \item[\(kb \log \alpha\):] Here, we have \(kb^\star\log\alpha=O(kb^\star)=O(\frac{n}{L}\sqrt{L\log\log\log n})=O(n\frac{\sqrt{\log\log\log n}}{\sqrt{L}})=O(n\frac{\sqrt{\log\log n}\sqrt{\log\log\log n}}{\log n})\)
\end{description}
  
\end{proof}

\section{Additional Experimental Results}
\label{sec:additional_experimental_results}
In this section, we present additional experimental results.
First, in \cref{fig:other_pareto_plots}, we give the Pareto plots for smaller densities (10\,\% and 1\,\%).
Here, the overall picture is the same.
The most notably difference is the very impressive query performance of rank9select.
For these inputs, it is the overall fastest select data structure, only being penalized by its 50\,\% space-overhead.
The relative behavior of \nameDS\ and simple-0/1/2/3 remains the as in the main part of the paper.

Next, in \cref{fig:rank}, we give rank query times for all data structures supporting rank queries.
Here, the most notable difference, compared to all other plots, is the occurrence of poppy for small inputs.
This are the only benchmarks it successfully concluded.
When it comes to rank queries, \nameDS\ is competitive but never the fastest rank data structure.
On larger bit vectors it even is the slowest one.
However, it should be noted that \nameDS\ is the the most space-efficient rank data structure.
In \cref{fig:rank-pareto}, we give a space/time plot that shows how much less memory it requires.
\nameDS\ requires \(4.5\times\) less space than the current state-of-the-art space-efficient rank and select data structure Pasta Flat.
On the extreme side, it requires \(64.1\times\) less space than the fastest rank and select data structure---while being only \(3.3\times\) to \(4.8\times\) slower.

\begin{figure*}[h]
	\centering
	\includegraphics[scale=.8]{\detokenize{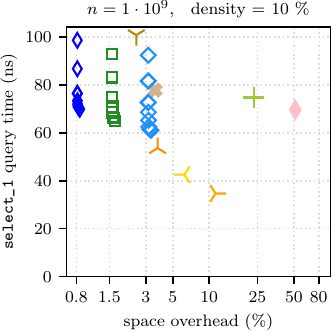}}
\includegraphics[scale=.8]{\detokenize{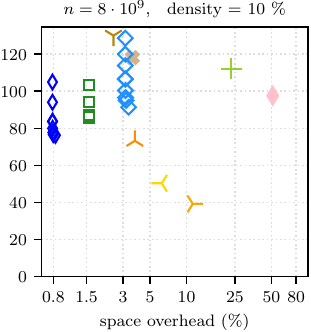}}
\includegraphics[scale=.8]{\detokenize{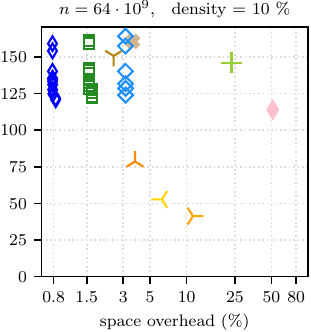}}
	\vspace{20pt}

	\includegraphics[scale=.8]{\detokenize{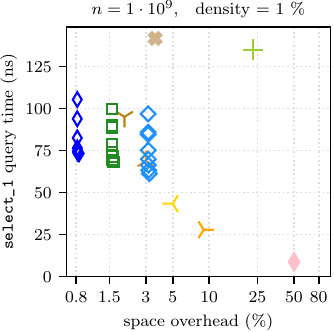}}
\includegraphics[scale=.8]{\detokenize{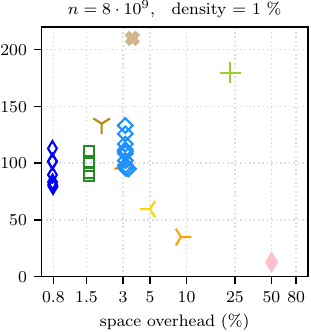}}
\includegraphics[scale=.8]{\detokenize{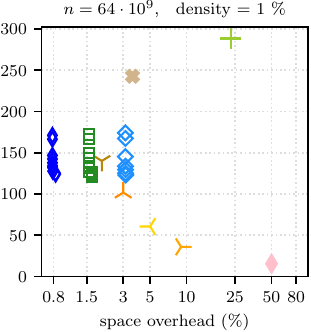}}
	\vspace{10pt}

	\includegraphics[]{\detokenize{plots/pareto2/_legend_.pdf}}
	\caption{Remaining Pareto plots for smaller densities.}
        \label{fig:other_pareto_plots}
\end{figure*}

\begin{figure*}[t]
	\centering
	\includegraphics[scale=.8]{\detokenize{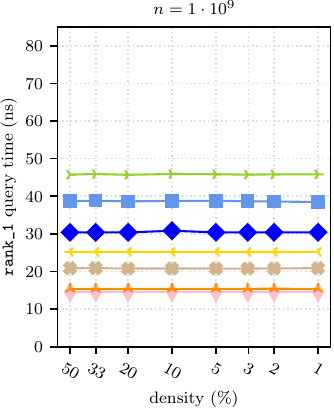}}
\includegraphics[scale=.8]{\detokenize{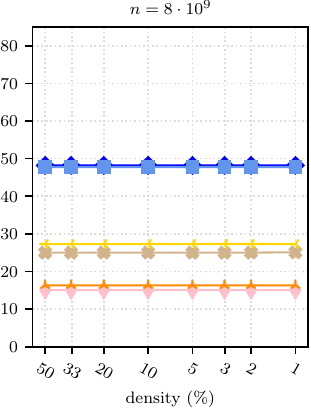}}	\includegraphics[scale=.8]{\detokenize{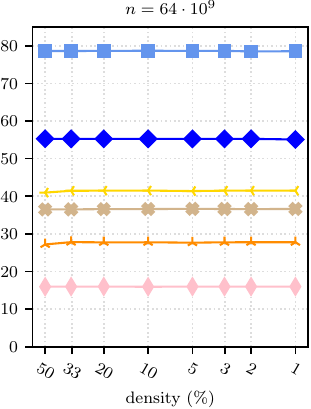}}
	\hspace{30pt}\includegraphics[]{\detokenize{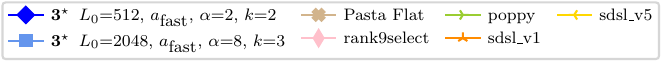}}
	\caption{Query times for rank queries on bit vectors with different densities and uniform distribution of \texttt{1}-bits.}
        \label{fig:rank}
\end{figure*}

\begin{figure*}[t]
	\centering
	\includegraphics[scale=.8]{\detokenize{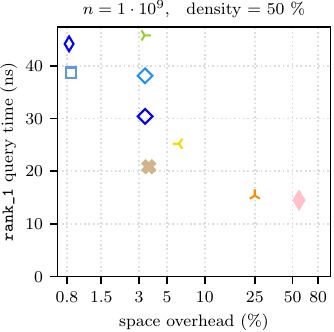}}
\includegraphics[scale=.8]{\detokenize{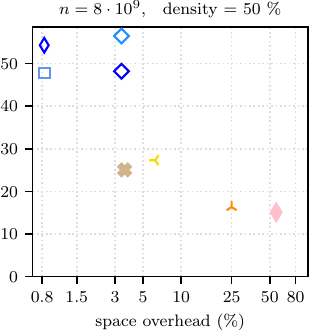}}	\includegraphics[scale=.8]{\detokenize{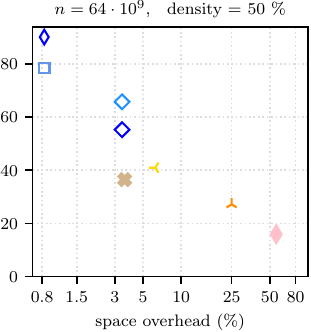}}
	\hspace{30pt}\includegraphics[]{\detokenize{plots/rank/_legend_.pdf}}
	\caption{Pareto plots for space overhead and. rank query time on independent-bit instances with density 50\%.}
        \label{fig:rank-pareto}
\end{figure*}

\end{document}